\newcommand{\subcaptionhack}{
}
\newif\ifarxive
\newcommand{\springerarxiv}[2]{{#2}\xspace}
\newcommand{\springerarxiv}[2]{{#1}\xspace}
\Crefname{observation}{Observation}{Observations}
\Crefname{algorithm}{Algorithm}{Algorithms}
\Crefname{section}{Sect.}{Sects.}
\Crefname{observation}{Observation}{Observations}
\Crefname{lemma}{Lemma}{Lemmas}
\Crefname{claim}{Claim}{Claims}
\Crefname{figure}{Fig.}{Figs.}
\Crefname{figure}{Fig.}{Figs.}
\Crefname{enumi}{Property}{Properties}
\Crefname{property}{Property}{Properties}
\Crefname{remark}{Remark}{Remarks}
\definecolor{realblue}{rgb}{0,0,1}
\definecolor{darkerblue}{rgb}{0.094,0.455,0.804}
\definecolor{darkblue}{rgb}{0.063,0.306,0.545}
\definecolor{red}{rgb}{0.627,0.117,0.156}
\definecolor{green}{rgb}{0,0.588,0.509}
\definecolor{orange}{rgb}{0.903,0.739,0.382}
\definecolor{realred}{rgb}{1,0,0}
\newcommand{\blue}[1]{{{\textcolor{blue}{#1}\xspace}}}
\renewcommand{\emph}[1]{\blue{\em {#1}}}
\newcommand{\md}{\textup{mod}}
\DeclareMathOperator{\rot}{rot}
\DeclareMathOperator{\mi}{mid}
\title{Planar L-Drawings of Bimodal Graphs}
\author{Patrizio Angelini \inst{1}\orcidID{0000-0002-7602-1524}
\and
Steven~Chaplick \inst{2}\orcidID{0000-0003-3501-4608}
\and
Sabine~Cornelsen \inst{3}\orcidID{0000-0002-1688-394X}
\thanks{The work of Sabine~Cornelsen was funded by the
  German Research Foundation DFG – Project-ID 50974019 – TRR 161 (B06).}
\and 
Giordano~Da~Lozzo \inst{4}\orcidID{0000-0003-2396-5174}
\thanks{The work of Giordano~Da~Lozzo was partially supported by MIUR grant 20174LF3T8 {\em ``AHeAD: efficient Algorithms for HArnessing networked Data''}.}}
\authorrunning{P. Angelini et al.}
\institute{
  John Cabot University, Rome, Italy
  \href{mailto:pangelini@johncabot.edu}{pangelini@johncabot.edu}
  \and
  Maastricht University, The Netherlands
  \href{mailto:s.chaplick@maastrichtuniversity.nl}{s.chaplick@maastrichtuniversity.nl}
  \and
  University of Konstanz, Germany 
  \href{mailto:sabine.cornelsen@uni-konstanz.de}{sabine.cornelsen@uni-konstanz.de}
  \and
  Roma Tre University, Rome, Italy
  \href{mailto:giordano.dalozzo@uniroma3.it}{giordano.dalozzo@uniroma3.it}\\
}
\begin{document}
\color{black}

\maketitle

\begin{abstract}
  In a {\em planar L-drawing} of a directed graph (digraph) each~edge~$e$ is represented as a polyline composed of a vertical segment starting at the tail of $e$ and a horizontal segment ending at the head of $e$. Distinct edges may overlap,
  but not cross. Our main focus is on {\em bimodal graphs}, i.e.,
  digraphs admitting a planar embedding in which the
  incoming and outgoing edges around each vertex are contiguous.
  We show that every plane bimodal graph without 2-cycles
  admits a planar L-drawing. This includes the class of upward-plane graphs.
  Finally, outerplanar digraphs
  admit a planar L-drawing~--~although they do not always have a bimodal embedding~--~but not necessarily with an
  outerplanar embedding.
  \keywords{Planar L-Drawings \and Directed Graphs \and Bimodality}
\end{abstract}

\section{Introduction}

\begin{figure}[t]
$ $\hfill
  \subfigure[\label{SUBFIG:k4-2-modal}Bimodal]{\includegraphics[page=2]{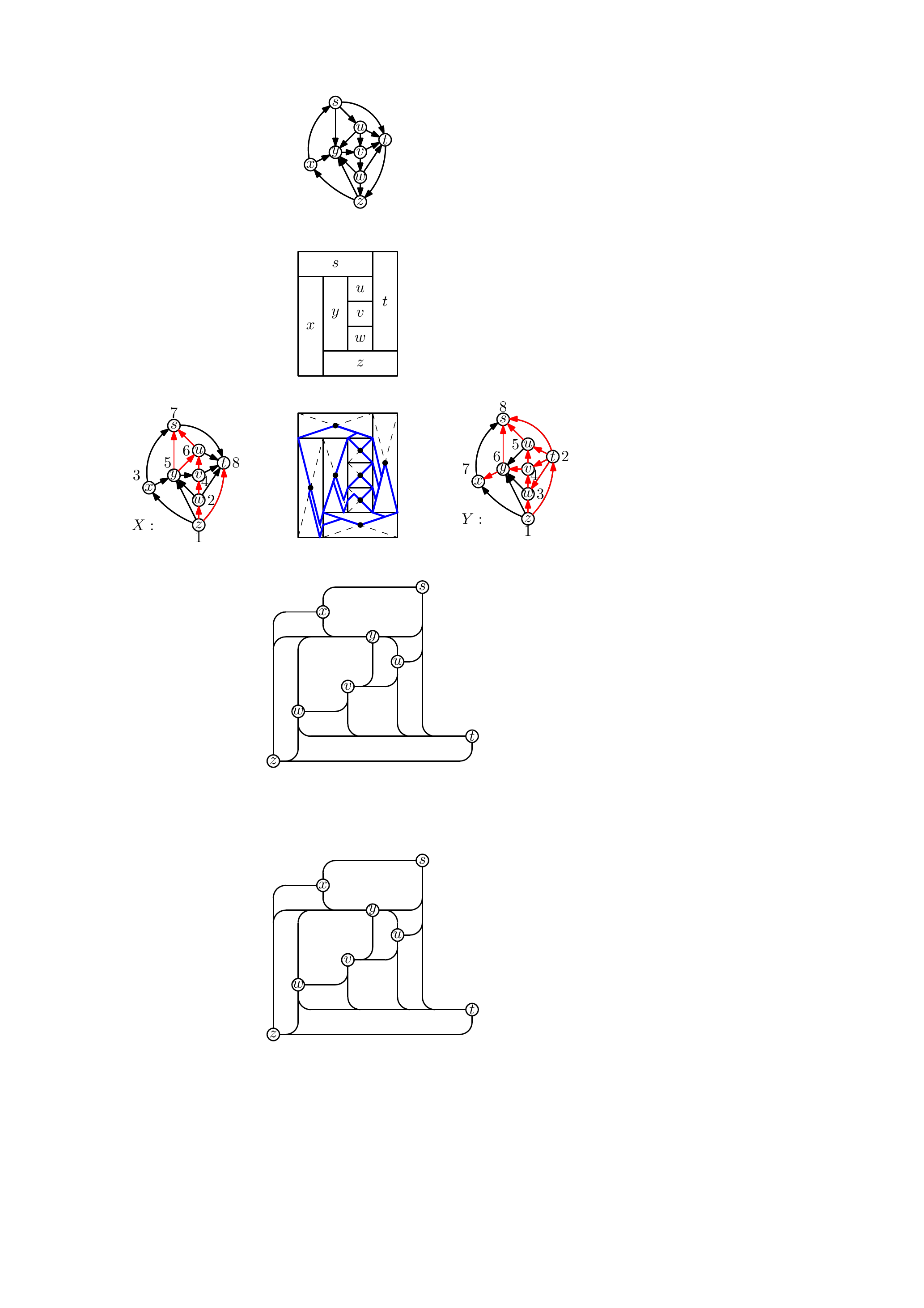}}\hfill
  \subfigure[\label{SUBFIG:k4-dual}Rectangular Dual]{\rule{10pt}{0em}\includegraphics[page=3]{intro}\rule{10pt}{0em}}\hfill $ $
  \subfigure[\label{SUBFIG:k4-Ldrawing}L-drawing]{\includegraphics[page=4]{intro}}\hfill
  \subfigure[\label{SUBFIG:k4-kandinsky}Kandinsky]{\includegraphics[page=5]{intro}}\hfill $ $
  \caption{\label{FIG:k4}Various representations of a bimodal irriducible triangulation.}
\end{figure}

In an \emph{L-drawing} of a directed graph (digraph), vertices are
represented by points with distinct x- and y-coordinates, and each
directed edge $(u,v)$ is a polyline consisting of a vertical segment
incident to the tail $u$ and of a horizontal segment incident to the
head~$v$. Two edges may overlap in a subsegment with end point at a
common tail or head. An L-drawing is \emph{planar} if no two
edges cross (\cref{SUBFIG:k4-Ldrawing}).
Non-planar L-drawings were first defined by Angelini et
al.~\cite{DBLP:journals/ijfcs/KariOALBDPRT18a}. Chaplick et
al.~\cite{chaplick_etal:gd17} showed that it is NP-complete to decide
whether a directed graph has a planar
L-drawing if the embedding is not
fixed. However it can be decided in linear time whether a planar
st-graph has an \emph{upward-planar L-drawing}, i.e. an L-drawing in
which the vertical segment of each edge leaves its tail from the top.

A vertex $v$ of a plane digraph $G$ is
\emph{$k$-modal} ($\md(v) = k$) if in the cyclic sequence of edges
around $v$ there are exactly $k$ pairs of consecutive edges that are
neither both incoming nor both outgoing. 
A digraph $G$ is
\emph{$k$-modal} if $\md(v) \leq k$ for every vertex $v$ of~$G$. 
The 2-modal graphs are often referred to as \emph{bimodal}, see \cref{SUBFIG:k4-2-modal}.
Any plane digraph admitting a planar L-drawing is clearly  4-modal. 
Upward-planar and level-planar drawings induce bimodal embeddings. 
While testing whether a graph has a bimodal embedding is possible in linear time, 
testing whether a graph has a 4-modal embedding~\cite{DBLP:conf/esa/VialLG19} and testing whether a partial orientation of a plane graph can be extended to be bimodal~\cite{DBLP:journals/tcs/BinucciDP14} are NP-complete. 

A \emph{plane digraph} is a planar digraph with a fixed rotation
system of the edges around each vertex and a fixed outer face. In an
L-drawing of a plane digraph $G$ the clockwise cyclic order of the edges
incident to each vertex and the outer face is the one prescribed for
$G$.
In a planar L-drawing the edges 
attached to the same port 
of a vertex $v$ are ordered as follows: There are first the edges bending to the left with increasing length of the segment incident to $v$ and then those bending to the right with decreasing length of the segment incident to $v$.

This is analogous to the Kandinsky model~\cite{foessmeier/kaufmann:gd95} where vertices are
drawn as squares of equal size on a grid and edges as
orthogonal polylines on a finer grid (\cref{SUBFIG:k4-kandinsky}). Bend-minimization in the
Kandinsky model is NP-complete~\cite{blaesius_etal:esa14} and can be
approximated within a factor of two~\cite{barth_etal:gd06}.  Each
undirected simple graph admits a Kandinsky drawing with
one bend per edge~\cite{brueckner:BA}.  The relationship between Kandinsky drawings and planar L-drawings was established in~\cite{chaplick_etal:gd17}.

L-drawings of directed graphs can be considered as bend-optimal
drawings, since one bend per edge is necessary in order
to guarantee the property that edges must leave a vertex from the top
or the bottom and enter it from the right or the left.  
Planar L-drawings can be also seen as a directed version of
$\scalerel*{\textbf{+}}{\textsf{T}}$-contact representations, where
each vertex is drawn as a $\scalerel*{\textbf{+}}{\textsf{T}}$ and two
vertices are adjacent if the respective
$\scalerel*{\textbf{+}}{\textsf{T}}$es touch. 
If the graph is bimodal
then the
$\scalerel*{\textbf{+}}{\textsf{T}}$es are \textsf{T}s (including
\rotatebox[origin=c]{90}{\textsf{T}},
\rotatebox[origin=c]{180}{\textsf{T}}, and
\rotatebox[origin=c]{270}{\textsf{T}}). Undirected planar graphs
always allow a \textsf{T}-contact representation, which can be computed utilizing Schnyder
woods~\cite{fraysseix/mendez/rosenstiehl:94}. 

Biedl and Mondal~\cite{biedl/mondal:arxiv17} showed that a
$\scalerel*{\textbf{+}}{\textsf{T}}$-contact representation can also
be constructed from a rectangular dual (\cref{SUBFIG:k4-dual}). A
plane graph with four vertices on the outer face has a rectangular
dual if and only if it is an inner triangulation without separating
triangles~\cite{kozminski/kinnen:networks85}.  Bhasker and Sahni~\cite{bhasker/sahni:88} gave the first
linear time algorithm for computing rectangular duals.
He~\cite{he:93} showed how to compute a rectangular dual from a regular edge labeling and Kant and
He~\cite{kant/he:wg93} gave two linear time algorithms for computing
regular edge labelings.
Biedl and Derka~\cite{biedl/derka:jgaa16} computed
rectangular duals via (3,1)-canonical orderings.

\subsubsection{Contribution:}
We show that every bimodal graph without 2-cycles admits a planar
L-drawing respecting a given bimodal embedding. This implies that every upward-planar graph admits a planar L-drawing
respecting a given upward-planar embedding. We thus solve an open problem posed in~\cite{chaplick_etal:gd17}. The
construction is based on rectangular duals.
Finally, we show that every outerplanar graph admits a planar
L-drawing but not necessarily one where all vertices are incident to
the outer face. We conclude with open problems.

Proofs for statements marked with $(\star)$ can be found in\springerarxiv{ the full version~\cite{arxivVersion}}{ the appendix}, where we also provide an iterative algorithm showing that any bimodal graph with 2-cycles admits a planar L-drawing if the underlying undirected graph without 2-cycles is a planar 3-tree. 

\section{Preliminaries}

\subsubsection{L-Drawings.}
For each vertex we consider four \emph{ports}, North, South, East, and West.
An L-drawing implies a \emph{port assignment}, i.e.\ an assignment of the edges to the ports of the end vertices such that the outgoing edges are assigned to the North and South port and the incoming edges are assigned to the East and West port.  
A port assignment for each edge $e$ of a digraph $G$ defines a pair
(out$(e)$,in$(e))\in\{$North,South$\}\times\{$East,West$\}$.  
An L-drawing \emph{realizes} a port assignment
if each edge $e=(v,w)$ is incident to the out$(e)$-port of $v$
and to the in$(e)$-port of $w$. A port assignment \emph{admits} a planar
L-drawing if there is a planar L-drawing 
that realizes it. Given a port assignment it can be tested in linear
time whether it admits a planar L-drawing~\cite{chaplick_etal:gd17}.

In this paper, we will distinguish between given
L-drawings of a triangle.
\begin{restatable}[$\star$]{lemma}{Ltriangles}
  \label{LEMMA:Ltriangles}
    \cref{FIG:RDouterface} shows all planar L-drawings of~a~triangle up to symmetry.
\end{restatable}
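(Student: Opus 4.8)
The plan is to enumerate planar L-drawings of a triangle by a finite case analysis on the combinatorial data that an L-drawing of a three-vertex graph can carry, and then to collapse the enumeration using the symmetries of the square (rotations by $90^\circ$ and reflections). The relevant combinatorial data for each edge $e$ is the pair $(\mathrm{out}(e),\mathrm{in}(e)) \in \{\text{North},\text{South}\}\times\{\text{East},\text{West}\}$, together with, for each vertex, the cyclic order in which its (at most two) incident edge-segments leave it through a given port and on which side they bend. Since a triangle has only three edges and three vertices, this data set is small enough to list exhaustively.

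First I would fix one vertex of the triangle as the ``top-left'' reference point and normalize, using the dihedral group of order eight acting on the four ports, so that one designated edge has a canonical port pair, say $(\text{South},\text{East})$; this removes most of the symmetry up front. Next I would observe that the triangle, being a $3$-cycle in its underlying graph, forces each vertex to have degree two, so at every vertex the two incident edges either share a port (and are then nested by the left-bend/right-bend ordering rule quoted in the preliminaries) or occupy two different ports. I would then branch on the orientation type of the triangle — a directed $3$-cycle versus a triangle with a source and a sink — since these give different multisets of $\mathrm{out}/\mathrm{in}$ assignments, and for each type enumerate the consistent port assignments of the three edges subject to the planarity-of-L-drawing criterion (testable combinatorially, per \cite{chaplick_etal:gd17}). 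For each surviving assignment I would produce the actual drawing, check that no two of the three L-shapes cross, and record it; finally I would quotient the resulting list by the remaining symmetries and match it against the drawings shown in \cref{FIG:RDouterface}.

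The main obstacle I expect is not any single hard step but rather bookkeeping: making sure the case analysis is genuinely exhaustive (no port assignment or nesting configuration silently dropped) while simultaneously not over-counting drawings that coincide under symmetry. A secondary subtlety is that ``up to symmetry'' here must be interpreted consistently with how L-drawings transform — a reflection swaps left-bends and right-bends and reverses the rotation system, so two drawings that look mirror-image must be checked to be related by an honest symmetry of the model and not merely superficially similar. Once the symmetry group action is pinned down precisely, each individual case is a routine planarity check on at most three polylines, and the lemma follows by direct comparison with the figure.
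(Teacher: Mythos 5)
Your plan follows the same basic strategy as the paper's proof: both split on whether the triangle is a directed $3$-cycle or has a source, a sink, and a third vertex, both normalize away most of the symmetry by fixing one edge's port at the outset, and both finish by a finite case analysis. The difference is in what gets enumerated. The paper enumerates \emph{relative vertex positions}: after fixing that $(s,w)$ leaves the North port of $s$, it branches on whether $w$ lies left or right of $s$ and then on the admissible positions of the sink $t$ in the coordinate order, which directly produces the eight non-directed cases of \cref{FIG:RDouterface}; for the directed $3$-cycle it avoids enumeration entirely via an angular-sum argument (no two edges share a port, so the triangle is a hexagon with angle sum $4\pi$, forcing exactly one $3\pi/2$ angle, which sits either at a vertex or at a bend). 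Your version instead enumerates port assignments and nesting data and filters by the planarity test of \cite{chaplick_etal:gd17}. That is workable but more brute-force, and it carries one small obligation the paper's parametrization avoids: you implicitly assume that each admissible port assignment of a triangle yields \emph{one} drawing up to the figure's equivalence, whereas the lemma classifies drawings, which could a priori be finer than port-assignment classes (the paper's sub-cases are literally distinguished by the relative $x$-coordinate of $t$, not by ports alone). For a triangle this uniqueness does hold, essentially because a planar L-drawing is determined up to order-preserving deformation by the coordinate orders, and these are forced by the ports for three mutually adjacent vertices, but your write-up should say so explicitly. Also note that what you submitted is a plan rather than a carried-out enumeration; the actual case list still has to be produced and matched against \cref{FIG:RDouterface}.
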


\subsubsection{Coordinates for the Vertices.}

Given a port assignment that admits a planar L-drawing, a planar
L-drawing realizing it can be computed in linear time by the general
compaction approach for orthogonal or Kandinsky
drawings~\cite{eiglsperger/kaufmann:gd01}. However, in this approach,
the graph has to be first augmented such that each face has a
rectangular shape.
For
L-drawings of plane triangulations it suffices to make sure that each
edge has the right shape given by the port assignment, which can be
achieved using topological orderings only.

\begin{restatable}[$\star$]{theorem}{drawingTriangulation}\label{THEO:drawing-triangulations}
	Let $G=(V,E)$ be a plane triangulated graph with a port assignment
	that admits a planar L-drawing and let $X$ and $Y$ be the digraphs
	with vertex set $V$ and the following edges. For each edge
	$e=(v,w) \in E$
	\begin{itemize}
		\item
		there is $(v,w)$ in $X$ if
		in$(e)=$ West and $(w,v)$ in $X$ if in$(e)=$ East.
		\item there is $(v,w)$ in $Y$ if out$(e)=$
		North and $(w,v)$ in $Y$ if out$(e)=$ South.
	\end{itemize}
	Let $x$ and $y$ be a topological ordering of $X$ and $Y$,
    respectively. Drawing each vertex $v$ at $(x(v),y(v))$ yields a planar
    L-drawing realizing the given port assignment.
\end{restatable}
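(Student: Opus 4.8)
The plan is to verify two things: that drawing each vertex $v$ at $(x(v), y(v))$ is well-defined (i.e., $x$ and $y$ exist as topological orderings, so $X$ and $Y$ are acyclic), and that every edge then receives the shape dictated by the port assignment, from which planarity follows because the port assignment was assumed to admit a planar L-drawing. The key preliminary observation is that $X$ and $Y$ are \emph{not} arbitrary digraphs: since $G$ is a plane triangulation, every internal face is a triangle, and the port assignment restricted to a triangle is one of the finitely many types catalogued in \cref{LEMMA:Ltriangles}. I would first check, face by face, that in each admissible L-drawing of a triangle the three edges are oriented consistently in $X$ (and in $Y$) — that is, the contribution of each triangular face to $X$ is itself acyclic, and likewise for $Y$. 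Summing this local consistency over all faces via the planar structure should give acyclicity of $X$ and of $Y$ globally; concretely, I expect to argue that a directed cycle in $X$ would have to be realized as a closed walk in the plane that, by the port rules (West-incoming edges go one way, East-incoming the other), forces a monotone behavior of the $x$-coordinate that cannot close up — but to make this rigorous without already having the drawing, I would instead push the acyclicity down to the triangle level using \cref{LEMMA:Ltriangles} and then glue.

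Next, assuming $x$ and $y$ are topological orderings of $X$ and $Y$ respectively, I would show edge shapes are correct. Take an edge $e = (v,w)$. If $\mathrm{in}(e) = \text{West}$, then $(v,w) \in X$, so $x(v) < x(w)$, meaning $v$ is to the left of $w$; combined with the rule that $w$ receives $e$ on its West port, the horizontal segment of $e$ enters $w$ from the left, which is consistent. Symmetrically for $\mathrm{in}(e) = \text{East}$ we get $x(w) < x(v)$, so $v$ is to the right of $w$ and the horizontal segment enters from the right. The analogous two cases for $\mathrm{out}(e) \in \{\text{North}, \text{South}\}$ using $y$ show the vertical segment leaves $v$ from the top or the bottom as required. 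Hence each edge $e = (v,w)$ is drawable as an L-shape with a vertical segment leaving $v$ at the correct port and a horizontal segment reaching $w$ at the correct port — this is exactly the geometry a port assignment prescribes, and since all vertices have distinct $x$- and distinct $y$-coordinates (topological orderings can be taken injective), the L-shapes are well-defined.

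Finally, for planarity: the hypothesis is that the given port assignment \emph{admits} a planar L-drawing, and by the characterization referenced from \cite{chaplick_etal:gd17} whether a port assignment admits a planar L-drawing depends only on combinatorial data (the port assignment together with the embedding), not on the specific coordinates. So any realization of that port assignment by L-shapes with distinct coordinates that respects the rotation system is planar. Since our coordinate assignment realizes the given port assignment (by the previous paragraph) and respects the fixed rotation system of the plane triangulation, the resulting drawing is planar. The main obstacle I anticipate is the first step — proving acyclicity of $X$ and $Y$ — because it is the only place where we genuinely use that $G$ is a \emph{triangulation} rather than an arbitrary plane graph; the argument must leverage \cref{LEMMA:Ltriangles} to rule out, within a single triangular face, any edge orientation pattern that would create a local directed cycle, and then argue that local acyclicity over a triangulated planar structure implies global acyclicity (otherwise a shortest directed cycle in $X$ would enclose a region triangulated into faces whose boundaries contradict the local analysis).
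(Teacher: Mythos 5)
Your middle step (each edge receives the shape dictated by the port assignment, because $x(v)<x(w)$ exactly when $\mathrm{in}(e)=\text{West}$, etc.) is correct, and your worry about acyclicity of $X$ and $Y$ is legitimate but is resolved much more simply than you propose: since the port assignment admits a planar L-drawing, take one such drawing; its $x$-coordinates (resp.\ $y$-coordinates) are a topological ordering of $X$ (resp.\ $Y$) by the very same shape argument, so both digraphs are acyclic and no face-by-face gluing is needed. The genuine gap is your final step. The inference ``admissibility of a port assignment is a combinatorial property, hence \emph{any} realization of an admissible port assignment with correct edge shapes and distinct coordinates is planar'' is a non-sequitur, and it is in fact false: the paper exhibits a directed $4$-cycle and a $4$-cycle of alternating switches (\cref{FIG:4-cycle,FIG:4-cycle-noinout}) whose port assignments admit planar L-drawings, yet which have non-planar realizations in which every edge has the prescribed shape. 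Your planarity argument never invokes the hypothesis that $G$ is triangulated, so if it were valid it would prove the statement for arbitrary plane graphs, which these examples refute.

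What is actually needed is an argument that uses the triangulation. The paper first reduces global planarity to the planarity of each face boundary (a Tutte-style counting lemma, \cref{LEMMA:planarIfFaces}: if the rotation system is respected and every face boundary is crossing-free, the drawing is planar), and then shows that a triangular face whose three edges all have the correct shape cannot self-intersect. That second step is a short case analysis: a crossing between two edges of a triangle forces either two edges at the same port of their common endpoint with both bends convex into the face (violating the concave-bend condition that any admissible port assignment must satisfy), or forces a port assignment on the third edge of the triangle that does not admit a planar L-drawing (i.e., does not appear in the catalogue of \cref{LEMMA:Ltriangles}). This is where the hypothesis ``the port assignment admits a planar L-drawing'' does real work, beyond merely guaranteeing that the topological orderings exist. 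To repair your proof you would need to supply this face-level case analysis (or an equivalent local argument); as written, the last paragraph assumes the conclusion.
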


Observe that we can modify the edge lengths in a planar L-drawing
independently in x- and y-directions in an arbitrary way, as long as we
maintain the ordering of the vertices in x- and y-direction,
respectively. This will still yield a planar L-drawing. This fact
implies the following remark.
\begin{remark}\label{REM:givenOuterFace}
  Let $G$ be a plane digraph with a triangular outer face, let
  $\Gamma$ be a planar L-drawing of $G$, and let $\Gamma_0$ be a
  planar L-drawing of the outer face of $G$ such that the edges on the
  outer face have the same port assignment in $\Gamma$ and
  $\Gamma_0$. Then there exists a planar L-drawing of $G$ with the
  same port assignment as in $\Gamma$ in which the drawing of the
  outer face is $\Gamma_0$.
\end{remark}	

\subsubsection{Generalized Planar L-Drawings.}
An \emph{orthogonal polyline} $P=\left<p_1,\dots,p_n\right>$ is a sequence of points 
s.t.\ $\overline{p_ip_{i+1}}$ is vertical or horizontal.
For $1\leq i \leq n-1$ and a point $p\in \overline{p_ip_{i+1}}$, the polyline $\left<p_1,\dots,p_i,p\right>$ is a \emph{prefix} of $P$ and the polyline
$\left<p,p_{i+1},\dots,p_n\right>$ is a \emph{suffix} of $P$.  Walking from $p_1$ to
$p_n$, consider a \emph{bend} $p_i$, $i=2,\dots,n-1$.  The rotation
$\rot(p_i)$ is $1$ if $P$ has a left turn at $p_i$, $-1$ for a right turn, and $0$ otherwise (when $\overline{p_{i-1}p_i}$ and $\overline{p_ip_{i+1}}$ are both vertical or horizontal).
The \emph{rotation} of $P$ is $\rot(P)=\sum_{i=2}^{n-1}\rot(p_i)$.

In a \emph{generalized planar L-drawing} of a digraph, vertices are
still represented by points with distinct x- and y-coordinates and the
edges by orthogonal polylines with the following three properties. (1) Each
directed edge $e=(u,v)$ starts with a vertical segment incident to the
tail $u$ and ends with a horizontal segment incident to the head~$v$.
(2) The polylines representing two edges overlap in at most a common
straight-line prefix or suffix,
and they do not cross.

In order to define the third property, let init$(e)$ be the prefix of
$e$ overlapping with at least one other edge, let final$(e)$ be the
suffix of $e$ overlapping with at least one other edge, and let
mid$(e)$ be the remaining individual part of $e$. Observe that the
first and the last vertex of init$(e)$, final$(e)$, and mid$(e)$ are
end vertices of $e$, bends of $e$, or bends of some other edges. 
Now
we define the third property:
(3) For an edge $e$ one of the following
is true: (i) neither of the two end points of mid$(e)$ is a bend of $e$
and $\rot(e)=\pm 1$ or (ii) one of the two end points of mid$(e)$,
but not both, is a bend of $e$ and rot(mid$(e))=0$. See
\cref{FIG:generalized}.
As a consequence of the flow model of Tamassia~\cite{tamassia:87}, we
obtain the following lemma.

\begin{figure}[h]
\centering
$ $\hfill
\subfigure[\label{SUBFIG:generalized}not a generalized planar L-drawing]{\rule{2cm}{0cm}\includegraphics[page=1]{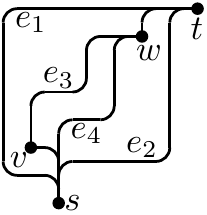}\rule{2cm}{0cm}}\hfill
\subcaptionhack
\subfigure[\label{SUBFIG:subdivided}underlying orthogonal drawing]{\rule{1.4cm}{0cm}\includegraphics[page=2]{generalized}\rule{1.4cm}{0cm}}
\hfill$ $
\caption{\label{FIG:generalized}Cond.~3 of generalized planar L-drawings is
  fulfilled for all edges but for $e_1$ and $e_2$. The rotation of
  each edge is $\pm1$. However, $\rot(\mi(e_1))=2$ and both end
  vertices of $\mi(e_2)$ are bends of $e_2$. }
\end{figure}

\begin{restatable}[$\star$]{lemma}{genL}
\label{LEMMA:genL}
  A plane digraph admits a planar L-drawing if and only if it admits a
  generalized planar L-drawing with the same port assignment. 
\end{restatable}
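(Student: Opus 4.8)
I would prove the two implications separately. One direction is immediate: every planar L-drawing is, in particular, a generalized planar L-drawing with the same port assignment. Indeed, Property~(1) is part of the definition of an L-drawing; two edges of a planar L-drawing overlap only along a shared vertical prefix at a common tail or a shared horizontal suffix at a common head (both straight), which yields Property~(2); and as each edge $e$ has a single bend we get $\rot(e)=\pm1$, so Property~(3) holds with case~(i) when that bend lies in the interior of $\mi(e)$ and with case~(ii) otherwise~--~the remaining possibility, that $\mi(e)$ degenerates to the bend of $e$, is ruled out by planarity, as it would force two other edges to cross at that point. So from now on I focus on the converse.

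\textbf{Key observation.} Let $\Delta$ be a generalized planar L-drawing of a plane digraph $G$, let $\mathcal E$ be the induced plane embedding, and let $\pi$ be the port assignment it realizes. Already $\pi$ prescribes, for every edge $e$, the unique rotation $\rho(e)\in\{+1,-1\}$ that an L-shaped drawing of $e$ realizing $(\mathrm{out}(e),\mathrm{in}(e))$ must have (e.g.\ $\rho(e)=-1$ when $(\mathrm{out}(e),\mathrm{in}(e))=(\text{North},\text{West})$, and symmetrically for the three other combinations). I claim that already in $\Delta$ one has $\rot(e)=\rho(e)$ for every edge $e$. Since $\mathrm{init}(e)$ and $\mathrm{final}(e)$ are straight, every bend of $e$ lies on $\mi(e)$ or is an endpoint of $\mi(e)$; hence Property~(3) gives $\rot(e)=\rot(\mi(e))=\pm1$ in case~(i), and $\rot(e)=\pm1$ in case~(ii) (the unique endpoint-bend contributes $\pm1$ while $\rot(\mi(e))=0$). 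On the other hand, the first segment of $e$ is vertical and the last one horizontal with the orientations dictated by $\mathrm{out}(e)$ and $\mathrm{in}(e)$, so $\rot(e)\equiv\rho(e)\pmod 4$; being in $\{+1,-1\}$, it equals $\rho(e)$. Thus Property~(3) is precisely what forbids a spurious total rotation such as $\pm3$.

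\textbf{Construction.} Replace every bend of $\Delta$ by a degree-$2$ dummy vertex (overlaps between edges need no dummy vertices; in the Kandinsky/port variant of Tamassia's model they become $0^\circ$ angles at the common tail or head). This turns $\Delta$ into a planar orthogonal representation $H$ of a subdivision of $G$ respecting $\mathcal E$ and $\pi$. By the key observation the path of $H$ replacing an edge $e$ has total rotation $\rho(e)=\pm1$, and every bend of $e$ one would want to delete lies strictly inside $\mi(e)$, the part of $e$ not overlapping any other edge. I then apply the standard bend-reduction operation on orthogonal representations (a consequence of Tamassia's flow model~\cite{tamassia:87}; cf.~\cite{chaplick_etal:gd17}), repeatedly removing from inside $\mi(e)$ a pair of consecutive, oppositely turning bends until $e$ carries a single bend, which must then have rotation $\rho(e)$; each removal is local to the non-overlapping part of $e$, so $\mathcal E$, $\pi$, and planarity are preserved. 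The resulting orthogonal representation has exactly one bend per edge, of the shape prescribed by $\pi$, and realizing it by the standard compaction procedure (after the usual augmentation making all faces rectangular) yields a planar L-drawing of $G$ with embedding $\mathcal E$ and port assignment $\pi$.

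\textbf{Where the difficulty is.} The delicate points are (a) the rotation bookkeeping in the key observation~--~verifying that Property~(3) pins $\rot(e)$ to the exact value $\rho(e)$ forced by the ports, not merely to $\pm1$ in sign~--~and (b) handling overlaps correctly: they must be modeled as $0^\circ$ angles in the port/Kandinsky variant of Tamassia's network, and one has to be sure that deleting surplus bends inside $\mi(e)$ never disturbs a vertex of $G$, the overlapping prefix or suffix of $e$, or a neighbouring edge.
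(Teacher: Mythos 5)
Your proposal is correct and follows essentially the same route as the paper: interpret the generalized drawing as an orthogonal drawing with dummy vertices at the shared bends, invoke Tamassia's flow model to cancel surplus bend pairs inside $\mi(e)$, and use Property~(3) to conclude that exactly one bend per edge survives. Your explicit verification that $\rot(e)$ equals the exact value forced by the port assignment (not merely $\pm1$) is a detail the paper leaves implicit, but it does not change the argument.
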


\subsubsection{Rectangular Dual.} An \emph{irreducible triangulation}
is an internally triangulated graph without separating triangles, where the outer face
has degree four (\cref{FIG:k4}).  A \emph{rectangular tiling} of a
rectangle $R$ is a partition of $R$ into a set of non-overlapping
rectangles such that no four rectangles meet at the same point. A
\emph{rectangular dual} of a planar graph is a rectangular tiling such
that there is a one-to-one correspondence between the inner rectangles
and the vertices and there is an edge between two vertices if and only
if the respective rectangles touch. We denote by $R_v$ the rectangle
representing the vertex $v$. Note that an irreducible triangulation
always admits a rectangular dual, which can be computed in linear
time~\cite{bhasker/sahni:88,biedl/derka:jgaa16,he:93,kant/he:wg93}.

\subsubsection{Perturbed Generalized Planar L-drawing.}
Consider a rectangular dual for a directed irreducible
triangulation $G$. We construct a drawing of $G$ as follows. We place
each vertex of $G$ on the center of its rectangle. Each edge is routed
as a \emph{perturbed orthogonal polyline}, i.e., a polyline within the
two rectangles corresponding to its two end vertices, such that each
edge segment is parallel to one of the two diagonals of the 
rectangle containing it. See \cref{SUBFIG:vw_can}.
This drawing is called a
\emph{perturbed generalized planar L-drawing} if and only if (1) each
directed edge $e=(u,v)$ starts with a segment on the diagonal
$\backslash_u$ of $R_u$ from the upper left to the lower right corner
and ends with a segment on the diagonal $\slash_v$ of $R_v$ from the
lower left to the upper right corner. Observe that a change of
directions at the intersection of $R_v$ and $R_u$ is not considered a
bend if the two incident segments in $R_v$ and $R_u$ are both parallel
to $\backslash$ or to $\slash$.  The definition of rotation and
Conditions~(2) and (3) are analogous to generalized planar L-drawings.

In a perturbed generalized planar L-drawing, the North port of a
vertex is at  the segment between the center and the upper left
corner of the rectangle. The other ports are defined analogously.
Since we can always approximate a segment with an orthogonal polyline
(\cref{SUBFIG:perturbed,SUBFIG:zigzag,SUBFIG:zigzagrotated}), we
obtain the following.

\begin{figure}[h]
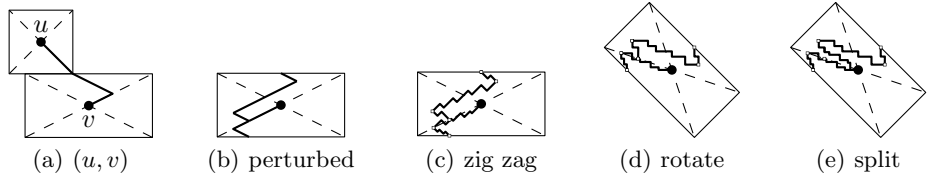

\centering
\subfigure[\label{SUBFIG:vw_can}$(u,v)$]{\includegraphics[page=7]{generalized}}
\hfill
\subfigure[\label{SUBFIG:perturbed}perturbed]{$\;\;$\includegraphics[page=3]{generalized}$\;\;$}
\hfill
\subcaptionhack
\subfigure[\label{SUBFIG:zigzag}zig zag]{\includegraphics[page=4]{generalized}}
\hfill
\subcaptionhack
\subfigure[\label{SUBFIG:zigzagrotated}rotate]{\includegraphics[page=5]{generalized}}
\hfill
\subfigure[\label{SUBFIG:zigzagsplit}split]{\includegraphics[page=9]{generalized}}
\caption{\label{FIG:zigzag} (a) An edge in a perturbed generalized planar L-drawing. (b-e) From a perturbed generalized planar L-drawing to a generalized planar L-drawing. }
\end{figure}
\begin{restatable}[$\star$]{lemma}{perturbed}
	\label{LEMMA:perturbed}
	If a directed irreducible triangulation has a perturbed generalized planar L-drawing, then it has a planar
	L-drawing with the same~port~assignment. 
\end{restatable}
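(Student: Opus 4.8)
The plan is to turn the given perturbed generalized planar L-drawing $\Gamma$ into an ordinary generalized planar L-drawing $\Gamma'$ of the same irreducible triangulation with the same port assignment, and then to invoke \cref{LEMMA:genL}. The only non-orthogonal features of $\Gamma$ are the maximal straight sub-segments of the edges running along a diagonal $\backslash_v$ or $\slash_v$ of some rectangle; I would replace each such sub-segment by a very fine orthogonal staircase (a ``zig-zag'') contained in an $\varepsilon$-neighbourhood of it, as in \cref{SUBFIG:perturbed,SUBFIG:zigzag}. First I would fix $\varepsilon$ smaller than the minimum distance in $\Gamma$ between any two non-adjacent edges and between any edge and any non-incident vertex; then the replacement keeps the drawing planar and leaves the rotation system at every vertex and at the outer face unchanged. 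In particular the port assignment is preserved, since the first segment of the staircase replacing the diagonal half at the tail $u$ can be chosen vertical and pointing away from the center of $R_u$ in the cardinal direction prescribed by $\mathrm{out}(e)$, and symmetrically the last segment at the head $v$ can be chosen horizontal and pointing into the center of $R_v$ in the direction prescribed by $\mathrm{in}(e)$.

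It then remains to check the three defining properties of a generalized planar L-drawing for $\Gamma'$. Property~(1) is immediate from the choice above. For Property~(3) I would use \emph{balanced} zig-zags: the staircase that replaces a maximal diagonal sub-segment is built so that it enters and leaves with the same cardinal direction at each of its two endpoints (e.g.\ $N,W,N,\dots ,N$), so that its rotation is $0$; then $\rot(e)$ and $\rot(\mi(e))$ remain exactly the values they had in $\Gamma$, and hence the alternative (i)/(ii) that was satisfied in $\Gamma$ is still satisfied in $\Gamma'$. Whenever a diagonal sub-segment has to be joined to a port segment ``with the wrong parity'' (so that the naive staircase would end with a vertical instead of a horizontal segment, or vice versa), a single shift of the staircase pattern repairs it; see \cref{SUBFIG:zigzagrotated}.

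The step I expect to be the main obstacle is Property~(2). Several edges of $\Gamma$ may overlap along a common diagonal sub-segment near a shared port, and replacing it by a single shared staircase would make the corresponding polylines overlap along a non-straight curve, which is forbidden. The fix is to \emph{split} such a bundle, see \cref{SUBFIG:zigzagsplit}: all edges of the bundle share only the first segment of the staircase --- which is precisely the vertical port segment at a common tail, respectively the horizontal port segment at a common head --- and from there on they run as pairwise nested zig-zags inside the $\varepsilon$-tube around the diagonal, so that they neither cross nor overlap except in that common straight-line prefix (respectively suffix). I would then verify that this nesting is consistent with the cyclic order of the edges around the port and that each individual zig-zag can still be taken balanced, so that Properties~(2) and~(3) hold simultaneously; this is routine once the construction is set up.

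Putting the pieces together, $\Gamma'$ is a generalized planar L-drawing realizing the same port assignment as $\Gamma$, and \cref{LEMMA:genL} then yields the required planar L-drawing. The heart of the argument is the tension between keeping all overlaps straight-line (the splitting step) and keeping every edge rotation unchanged (the balanced staircases); everything else is elementary perturbation bookkeeping.
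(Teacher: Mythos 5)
Your proposal is correct and follows essentially the same route as the paper's proof: replace the diagonal segments by orthogonal zig-zags of rotation zero (the paper phrases this as approximating each piece by a $45^\circ$-rotated orthogonal polyline and then rotating the whole drawing back), repair the non-straight overlaps at shared ports by the splitting step of \cref{SUBFIG:zigzagsplit} while keeping each $\rot(\mi(e))$ unchanged, and conclude via \cref{LEMMA:genL}. The only cosmetic difference is that the paper first subdivides edges at bends and rectangle boundaries so that every piece lies inside a single rectangle before approximating, which you fold into your $\varepsilon$-neighbourhood bookkeeping.
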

\section{Planar L-Drawings of Bimodal Graphs}
We study planar L-drawings of plane bimodal graphs. 
Our main contribution is to show that if the graph does not contain any 2-cycles, then it admits 
a planar L-drawing (\cref{THEO:main}). \springerarxiv{In the full version of this paper~\cite{arxivVersion}}{In \cref{THEO:3-tree} in \cref{APP:3tree}}, we also show that if there are 2-cycles, then there is 
a planar L-drawing if the underlying undirected graph after removing parallel edges created by the 2-cycles is a planar 3-tree.

\subsection{Bimodal Graphs without 2-Cycles}
\label{SEC:2-modalfromRD}
  Our approach is inspired by the
work of Biedl and Mondal~\cite{biedl/mondal:arxiv17} that constructs a
$\scalerel*{\textbf{+}}{\textsf{T}}$-contact representation for
undirected graphs from a rectangular dual. 
We
extend their technique in order to respect the given orientations of
the edges.

The idea is to triangulate and decompose a given bimodal graph $G$. Proceeding from the outermost to the innermost
4-connected 
components, we construct planar L-drawings of
each component  that respects a given shape of the outer face.
We call a pair of edges $e_1,e_2$ a \emph{pincer} if $e_1$ and $e_2$ are on a triangle $T$, both are incoming or both outgoing edges of
its common end vertex $v$ (i.e.\ $v$ is a \emph{sink-} or a \emph{source switch} of $T$), and there is another edge $e$ of $G$ incident to
$v$ in the interior of $T$ but with the opposite direction. See \cref{FIG:badpincers}. 
If the outer face of a 4-connected component contains a pincer, we have to make sure that $e_1$ and $e_2$ are not assigned
to the same port of $v$ in an ancestral component. 
In a partial perturbed generalized planar L-drawing of $G$, we call a pincer \emph{bad} if $e_1$ and $e_2$ are assigned to the same port.  Observe that in a
bimodal graph, a pincer must be  a source or a sink in an ancestral component. Moreover, in a 4-connected component at most one
pair of incident edges of a vertex can be a pincer.

\begin{figure}[t]
\centering
    \subfigure[\label{SUBFIG:pincers}pincers]%
    {\includegraphics[page=6]{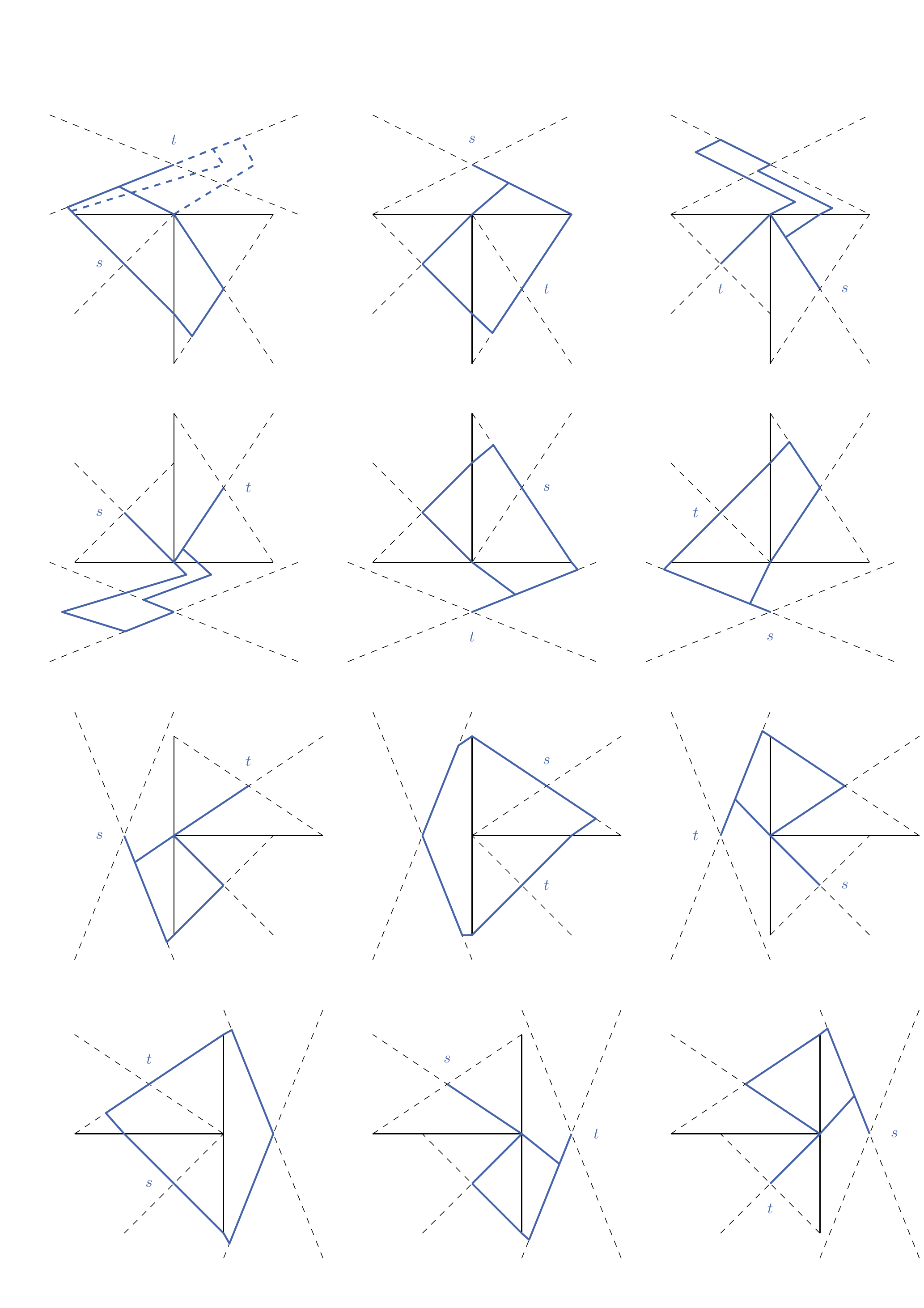}}
\hfill
\hspace{-1cm}\subcaptionhack%
\subfigure[\label{SUBFIG:badpincers}bad pincers]%
    {\rule{10pt}{0pt}\includegraphics[page=7]{examples}\rule{10pt}{0pt}}
\hfill
\hspace{-1cm}\subcaptionhack%
\subfigure[\label{SUBFIG:nobadpincers}alternative]%
	{\rule{10pt}{0pt}\includegraphics[page=8]{examples}\rule{10pt}{0pt}}
    \hfill
    \hspace{-1cm}\subcaptionhack%
    \subfigure[\label{SUBFIG:K4pincer}2-modal pincer]%
    {\rule{10pt}{0pt}\includegraphics[page=1]{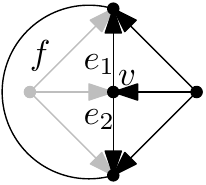}\rule{10pt}{0pt}}
    \hspace{-0.3cm}\subcaptionhack%
    \subfigure[\label{SUBFIG:virtualEdge}virtual edges]%
{\rule{10pt}{0pt}\includegraphics[page=2]{proofMainLemma}\rule{10pt}{0pt}}
\caption{\label{FIG:badpincers}(a) The blue edges incident to $v$ and $w$, respectively, are pincers that are bad in the drawing of the blue triangle in (b) and not bad in (c). (d) shows the only case (up to reversing directions) of a graph $H$ in \cref{SEC:4connected} with a pincer that is incident to a 2-modal vertex (the orientation of the undirected outer edge is irrelevant). (e) Avoiding bad pincers with virtual edges.}
\end{figure}

\begin{figure}[t]
\begin{center} \includegraphics[page=2,width=\textwidth]{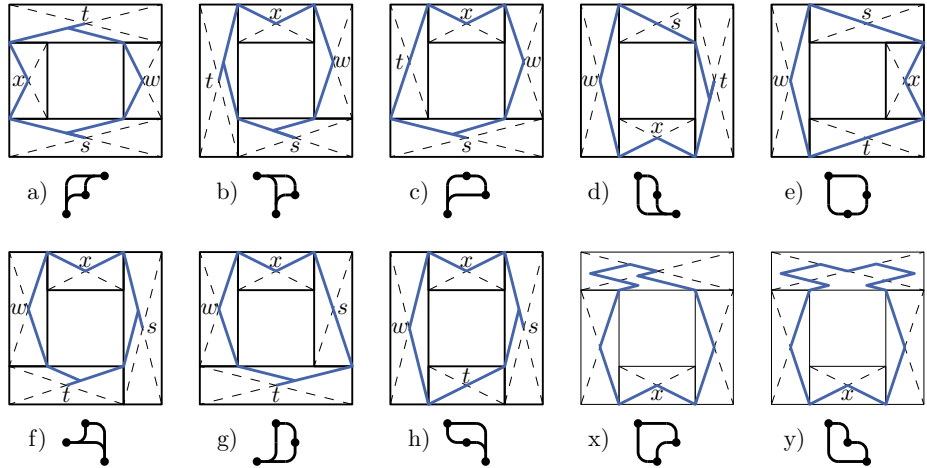} \end{center}
\caption{\label{FIG:RDouterface}Realization in the rectangular dual for any kind of drawings of the outer face up to symmetries.}
\end{figure}

\begin{theorem}\label{THEO:main}
  Every plane bimodal graph without 2-cycles admits a planar L-drawing. Moreover, such a drawing can be constructed in linear time.
\end{theorem}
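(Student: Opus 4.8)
The plan is to reduce, by triangulating and decomposing $G$, to the problem of drawing a single 4-connected piece, and to solve that case using a rectangular dual.

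\textbf{Reduction.} First I would augment $G$, in linear time, to a biconnected plane bimodal triangulation $\widehat{G}$ still without 2-cycles. That this is always possible is a small lemma in its own right: a chord used to triangulate a face can be oriented so as to extend, at both of its endpoints, one of the two (by bimodality, contiguous) blocks of incoming or outgoing edges, and a short case analysis shows that the only configurations in which \emph{no} admissible chord exists would already force a 2-cycle. Since deleting edges from a planar L-drawing creates neither crossings nor changes of ports, a planar L-drawing of $\widehat{G}$ restricts to one of $G$, so it suffices to draw $\widehat{G}$. I would then decompose $\widehat{G}$ along its separating triangles into a rooted tree of pieces: the root is the outermost piece, and the child hanging at a facial triangle $T$ of some piece is the subgraph of $\widehat{G}$ enclosed by $T$ (with the interiors of its own separating triangles again contracted to facial triangles). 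Every nontrivial piece is thus a 4-connected triangulation, while the trivial pieces ($K_4$ or a single triangle) are drawn directly using \cref{LEMMA:Ltriangles} and \cref{FIG:RDouterface}.

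Process the tree top-down. By \cref{REM:givenOuterFace}, once a port assignment on the outer triangle of a piece $H$ has been fixed, $H$ can be drawn with that outer triangle having any prescribed shape realizing the assignment; hence it is enough to prove the following statement and apply it recursively: for every piece $H$ and every port assignment $\pi$ of its outer triangle that is realizable by a planar L-drawing of a triangle (for the root, just pick any such $\pi$ compatible with the three outer edge directions), there is a planar L-drawing of $H$ realizing $\pi$. From such a drawing one reads off the port assignments of the facial triangles that carry children---these are automatically triangle-realizable, being faces of an actual planar L-drawing---and recurses into each child with that port assignment. The one extra invariant to carry along is that no child's outer triangle is a \emph{bad} pincer: at a source- or sink-switch $v$ of such a triangle whose two triangle edges carry an opposite-direction interior edge, the parent's drawing must put those two edges on different $E/W$ (respectively $N/S$) ports. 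This is enforced by temporarily inserting a virtual edge at $v$ inside the child's face before the parent is drawn (\cref{SUBFIG:virtualEdge}), and removing it afterwards.

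\textbf{Drawing a 4-connected piece.} Given a 4-connected triangulation $H$ (bimodal, no 2-cycles) with a triangle-realizable port assignment $\pi$ on its outer triangle $(a,b,c)$, delete the outer edge $bc$ (its orientation is irrelevant), obtaining an irreducible triangulation $H'$ with outer $4$-cycle $(a,b,w,c)$, where $bwc$ is the face of $H$ incident to $bc$. Compute a rectangular dual of $H'$, choosing the underlying regular edge labeling so that the port assignment \emph{induced} by the associated perturbed orthogonal drawing---outgoing edges leaving their tail through an $N/S$ port along the $\backslash$-diagonal, incoming edges entering their head through an $E/W$ port along the $\slash$-diagonal---agrees both with the edge orientations of $H$ and with $\pi$ on the edges incident to $a$. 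Routing each edge along the two diagonal segments of its two rectangles yields a drawing for which Conditions~(1)--(3) of a perturbed generalized planar L-drawing hold (Condition~(3) via Tamassia's flow model, exactly as for \cref{LEMMA:genL}); \cref{LEMMA:perturbed} then turns it into a planar L-drawing of $H'$ realizing this port assignment. Finally re-insert $bc$ in the outer face, routed as an L whose ports are those prescribed by $\pi$---possible because $b,w$ and $w,c$ are consecutive on the drawn outer $4$-cycle, so there is room.

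\textbf{Main obstacle and running time.} The crux is the choice of a regular edge labeling of $H'$ that is compatible with the prescribed edge orientations and the prescribed boundary ports, and this is exactly where the hypotheses are used: bimodality makes the incoming and outgoing edges around each vertex form two contiguous blocks, which can be sent into the two ``outgoing corners'' ($N$, $S$) and the two ``incoming corners'' ($E$, $W$) of its rectangle, while the absence of 2-cycles rules out the local obstructions; the regular-edge-labeling conditions then reduce to a globally consistent system that is solvable in linear time by the standard canonical-ordering / regular-edge-labeling machinery for rectangular duals, now restricted by the already-forced port directions and, on the outer $4$-cycle, by $\pi$ and the $O(1)$ cases of \cref{FIG:RDouterface}. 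I expect this constrained-labeling step, together with the pincer bookkeeping, to be the most delicate part of the proof; every other ingredient---the triangulation, the separating-triangle decomposition, computing the rectangular dual and the labeling, drawing each piece, and the $O(1)$ local repairs---runs in linear time, which yields the claimed overall bound.
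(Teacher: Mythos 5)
Your overall architecture (triangulate, decompose at separating triangles into 4-connected pieces, process top-down fixing the outer port assignment of each child via \cref{REM:givenOuterFace}, avoid bad pincers with virtual edges, and draw each 4-connected piece via a rectangular dual and \cref{LEMMA:perturbed}) matches the paper. But your very first reduction step contains a false claim: a plane bimodal graph without 2-cycles can \emph{not} in general be augmented to a bimodal triangulation. Consider a quadrangular face $v_1v_2v_3v_4$ whose vertices are alternately source and sink switches of that face. Either diagonal joins two source switches or two sink switches; orienting it forces an incoming edge between two outgoing edges at its head (or an outgoing edge between two incoming edges at its tail), raising that vertex's modality beyond 2. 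No 2-cycle is involved, so your ``short case analysis'' cannot close this case. The paper's \cref{LEMMA:quadrangulate} therefore only reduces to faces of degree at most four (with exactly these alternating 4-cycles remaining), and then inserts a new \emph{4-modal} degree-4 vertex into each such face. Those non-bimodal vertices then have to be carried through the whole 4-connected-piece algorithm (they are the reason \cref{SEC:4connected} works with graphs ``in which each vertex is 2-modal or an inner vertex of degree four'', and they generate the special $K_4$-pincer case of \cref{SUBFIG:K4pincer} and the degree-5 vertex $y$ next to the subdivision vertex $x$). Your proposal, which assumes all pieces are bimodal triangulations, silently skips all of this.

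The second gap is in the 4-connected step. You propose to \emph{choose} the regular edge labeling so that the induced diagonal routing already agrees with the prescribed edge orientations and with $\pi$, and you defer this to ``standard canonical-ordering / regular-edge-labeling machinery \dots restricted by the already-forced port directions.'' No such constrained machinery is known, and this is precisely the content you would have to supply: the paper instead takes an \emph{arbitrary} rectangular dual of $H_x$ (obtained by subdividing an outer edge rather than deleting one, so that $\pi$ is realized exactly rather than by a hand-waved re-insertion of $bc$) and then builds the port assignment on top of it by a delicate case analysis (canonical assignment, clockwise/counterclockwise switches at 3-directed, bi-directed and mono-directed sides, the {\sc Extra Rule} for collinear virtual rectangles), proving in \cref{LEMMA:canonical} and \cref{LEMMA:construction} that every edge is canonical at one endpoint and that the resulting routing is a perturbed generalized planar L-drawing. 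As written, your proposal names the right ingredients but leaves both the triangulation lemma (which is false as stated) and the port-assignment construction (which is the actual proof) unproved.
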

\begin{proof}
	Triangulate the graph as follows: Add a new directed triangle in the outer face.
	Augment the graph by adding edges to obtain a plane bimodal graph in which each face has degree at most four
    as shown in\springerarxiv{ the full version~\cite{arxivVersion}}{ \cref{LEMMA:quadrangulate} in \cref{APP:lemmas}}.
    More precisely, now each non-triangular face is bounded by a 4-cycle consisting of alternating source and sink switches of the face. We finally
	insert a 4-modal vertex of degree 4 into each non-triangular face
   maintaining the 2-modality of the neighbors. Let $G$ be the obtained triangulated graph. We
   construct a port assignment that admits a
    planar L-drawing of $G$ as follows. Decompose $G$ at separating
  triangles into 4-connected components.  Proceeding from the outermost to the innermost components, we compute a port assignment for each 4-connected component $H$, avoiding bad pincers and such that the ports of the outer face of $H$ are determined by the corresponding inner face of the parent component of $H$. See \cref{SEC:4connected}.
  By \cref{THEO:drawing-triangulations}, we compute a planar L-drawing realizing the given port assignment. 
  Finally, we remove the added vertices and edges from $\Gamma$.  
  Since  the augmentation of $G$ and its decomposition into 4-connected components~\cite{DBLP:journals/ijcga/Kant97} can be performed in linear time, the total running~time~is~linear.\end{proof}

\noindent
\cref{THEO:main} yields the following
implication, solving an open problem in~\cite{chaplick_etal:gd17}.

  \begin{corollary}
	Every upward-plane graph admits a planar L-drawing.
\end{corollary}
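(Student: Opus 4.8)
The plan is to obtain this corollary immediately from \cref{THEO:main}, once we verify that every upward-plane graph is a plane bimodal graph without 2-cycles and that the drawing produced by \cref{THEO:main} respects the prescribed embedding. So the whole argument reduces to recalling two standard facts about upward planarity and then quoting \cref{THEO:main}.

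First I would recall that an upward planar drawing realizes each edge as a curve that is strictly monotone in the $y$-direction. Hence the $y$-coordinates of the vertices induce a linear extension of the reachability relation, so the digraph is acyclic; in particular it contains no 2-cycle. Second, I would use the classical observation (already mentioned in the introduction) that the plane embedding underlying an upward planar drawing is bimodal: at each vertex $v$ every outgoing edge leaves $v$ into the open upper half-plane at $v$ and every incoming edge reaches $v$ from the open lower half-plane at $v$, so in the clockwise cyclic order of the edges around $v$ the incoming edges occupy one contiguous arc and the outgoing edges the complementary arc. Therefore $\md(v)\le 2$ for every vertex $v$, i.e., the upward-planar embedding is a bimodal embedding.

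Putting these together, an upward-plane graph, equipped with its upward-planar embedding, is a plane bimodal graph without 2-cycles, so \cref{THEO:main} applies and yields a planar L-drawing respecting exactly this embedding. I do not expect a genuine obstacle here: the content of the statement is carried entirely by \cref{THEO:main}, and the only things to be careful about are (i) noting that ``upward-plane'' means the embedding is fixed, so that the ``respecting the embedding'' clause of \cref{THEO:main} indeed gives us an L-drawing of the given upward-plane graph, and (ii) citing the acyclicity and bimodality facts rather than reproving them.
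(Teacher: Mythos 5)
Your proposal is correct and matches the paper's (implicit) argument exactly: the paper derives the corollary from \cref{THEO:main} using precisely the two facts you cite, namely that an upward-planar embedding is bimodal and that upward-planar digraphs are acyclic and hence free of 2-cycles. Nothing is missing.
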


\subsection{Planar L-Drawings for 4-Connected Bimodal Triangulations}\label{SEC:4connected}
In this subsection, we present the main algorithmic tool for the proof of \cref{THEO:main}. 
  Let $G$ be a triangulated plane digraph
without 2-cycles in which each vertex is 2-modal or an inner vertex of degree four. Let $H$ be a 4-connected component of $G$ (obtained by decomposing $G$ at its separating triangles) and 
let $\Gamma_0$ be a planar L-drawing of the outer face of $H$ without bad pincers of $G$. 
We now present an algorithm that constructs a planar L-drawing of $H$
in which the drawing of the outer face is $\Gamma_0$ and no face contains bad pincers~of~$G$. 

\subsubsection{Port Assignment Algorithm.
}
	The aim of the algorithm is to compute a port assignment for the edges of $H$ such that (i) there are no bad pincers and (ii) there exists a planar L-drawing realizing such an assignment. 
	Note that 
	the drawing $\Gamma_0$ already determines an assignment of the external edges to the ports of the external vertices. 
	By \cref{REM:givenOuterFace} any planar L-drawing with this given port assignment can be turned into one where the outer face has drawing $\Gamma_0$. 

	First, observe that $H$ does not contain vertices on the outer face that are 4-modal in $H$: %
	This is true since 4-modal vertices are inner vertices of degree four in the triangulated graph $G$ and since $G$ has no 2-cycles.
  This implies that $H$, likewise $G$, is a triangulated plane digraph without 2-cycles in which each vertex is 2-modal or an inner vertex of degree four.

   \paragraph{Avoiding Bad Pincers.}
	Next, we discuss the means that will allow us to avoid bad pincers.
	Let $e_1$ and $e_2$ be two edges with common end vertex $v$ that are incident to an inner face $f$ of $H$ such that $e_1,e_2$ is a pincer of $G$. Note that the triangle bounding $f$ is a separating triangle of $G$.  We call $f$ the \emph{designated face} of $v$. In the following we can assume that $v$ is 0-modal in $H$: In fact, if $v$ is 2-modal in $H$ then $v$ was an inner 4-modal vertex of degree 4 in $G$, and $e_1$ and $e_2$ are two non-consecutive edges incident to $v$. It follows that $H$ is a $K_4$ where the outer face is not a directed cycle. See \cref{SUBFIG:K4pincer}.
	For any given drawing $\Gamma_0$ of the outer face (see \cref{FIG:RDouterface} and \cref{LEMMA:Ltriangles} for the possible drawings of a triangle), the inner vertex can always be added such that no bad pincer is created. 
	Finally, observe that $v$ cannot be 4-modal in $H$ otherwise it would be at least 6-modal in $G$. 
	
	Hence, in the following, we only have to take care of pincers where the common end vertex is 0-modal in $H$. 
  Since each 0-modal vertex was 2-modal in $G$, it 
  has at most one designated face. In the following, we assume that all 0-modal vertices are assigned a designated incident inner face where no $0^\circ$~angle~is~allowed.

    \paragraph{Constructing the Rectangular Dual.}
  As an intermediate step towards a perturbed generalized planar L-drawing, we have to construct a rectangular dual of $H$, more precisely of an irreducible triangulation obtained from $H$ as follows.
  Let $s$, $t$, and $w$ be the vertices on the outer face of $H$.  Depending
  on the given drawing of the outer face, subdivide one of the edges
  of the outer face by a new vertex $x$ according to the cases given
  in \cref{FIG:RDouterface}~--~up to symmetries. Let $f$ be the
  inner face incident to $x$.  Then $f$ is a quadrangle. Triangulate
  $f$ by adding an edge $e$ incident to $x$: Let $y$ be the other end
  vertex of $e$. If $y$ was 2-modal, we can orient~$e$ such that
  $y$ is still 2-modal. If $y$ was 0-modal and $f$ was its
  designated face, then orient $e$ such that $y$ is now
  2-modal. Otherwise, orient $e$ such that $y$ remains
  0-modal. Observe that if $y$ had degree 4 in the
  beginning it has now degree 5.

  The resulting graph $H_x$ is triangulated, has no separating
  triangles and the outer face is bounded by a quadrangle, hence it is
  an irreducible triangulation. Thus, we can compute a rectangular
  dual $R$ for $H_x$.  Up to a possible rotation of a multiple of
  $90^\circ$, we can replace the four rectangles on the outer face
  with the configuration depicted in \cref{FIG:RDouterface} that
  corresponds to the given drawing of the outer face.  Let $R_v$ be
  the rectangle of a vertex $v$.

  \paragraph{Port Assignment.}
  We now assign edges to the ports of the incident vertices.  For the
  edges on the outer face the port assignment is given by $\Gamma_0$.
  \cref{FIG:RDouterface} shows the assignments for the outer face.

  Let $v$ be a vertex of $H_x$.  We define the \emph{canonical
    assignment} of an edge incident to $v$ to a port around $v$ as
  follows (see \cref{SUBFIG:canonical}). An outgoing edge $(v,u)$ is
  assigned to the North port, if $R_u$ is to the left or the top of
  $R_v$. Otherwise it is assigned to the South port. An incoming edge
  $(u,v)$ is assigned to the West port, if $R_u$ is to the left or the
  bottom of $R_v$. Otherwise it is assigned to the East port.
  
  In the following we will assign the edges to the ports of their end
  vertices such that each edge is assigned in a canonical way to at
  least one of its end points and such that crossings between edges
  incident to the same vertex can be avoided within the rectangle of
  the common end vertex. We exploit this property alongside with the
  absence of 2-cycles to prove that such an assignment determines a
  perturbed generalized planar L-drawing of the plane graph $H_x$.

  \begin{figure}[t]
    \centering
    \subfigure[\label{SUBFIG:canonical}canonical assignment]{\includegraphics[page=35]{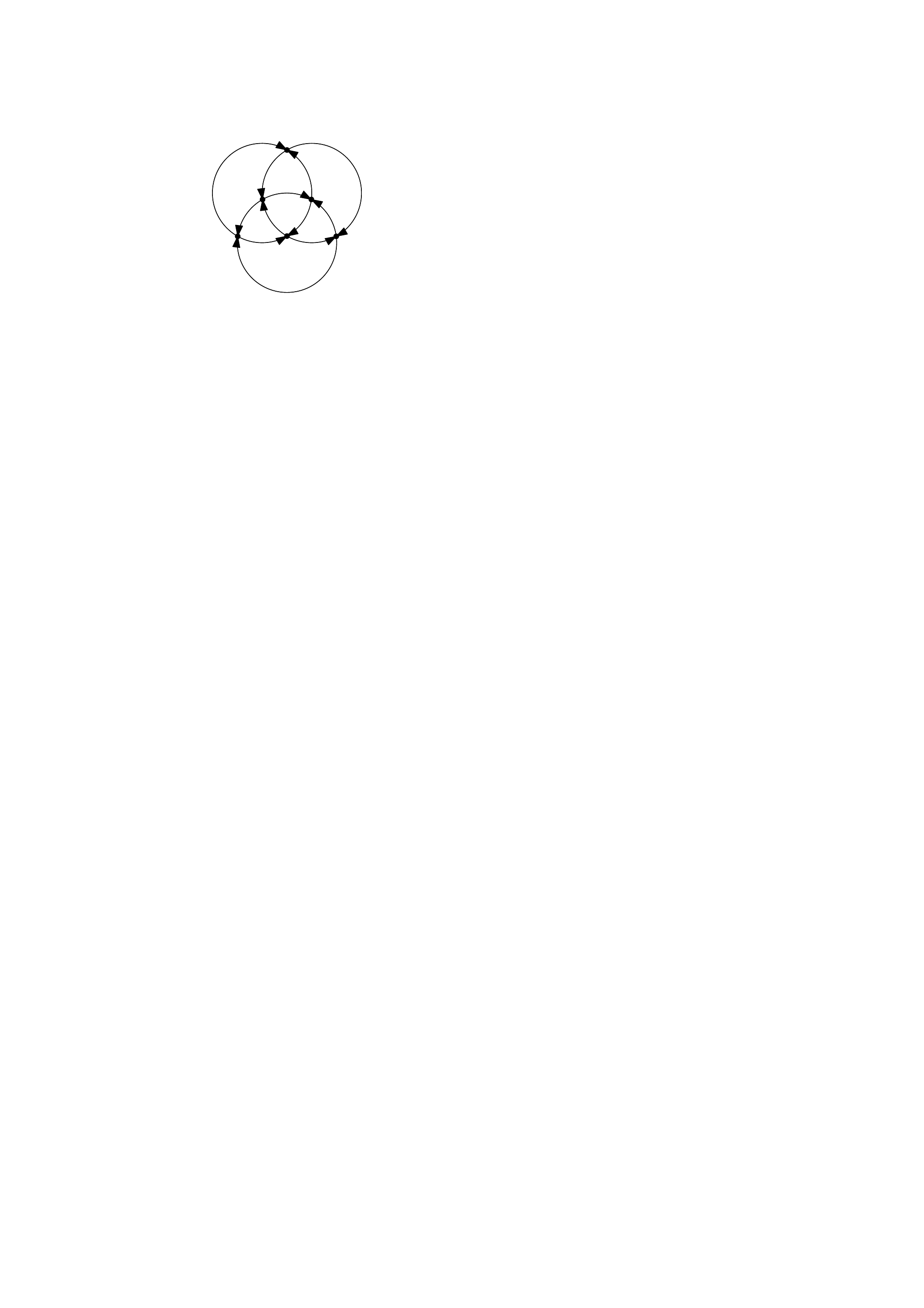}}\hfill
    \subcaptionhack
    \subfigure[\label{SUBFIG:outerx}around $y$]{\includegraphics[page=9]{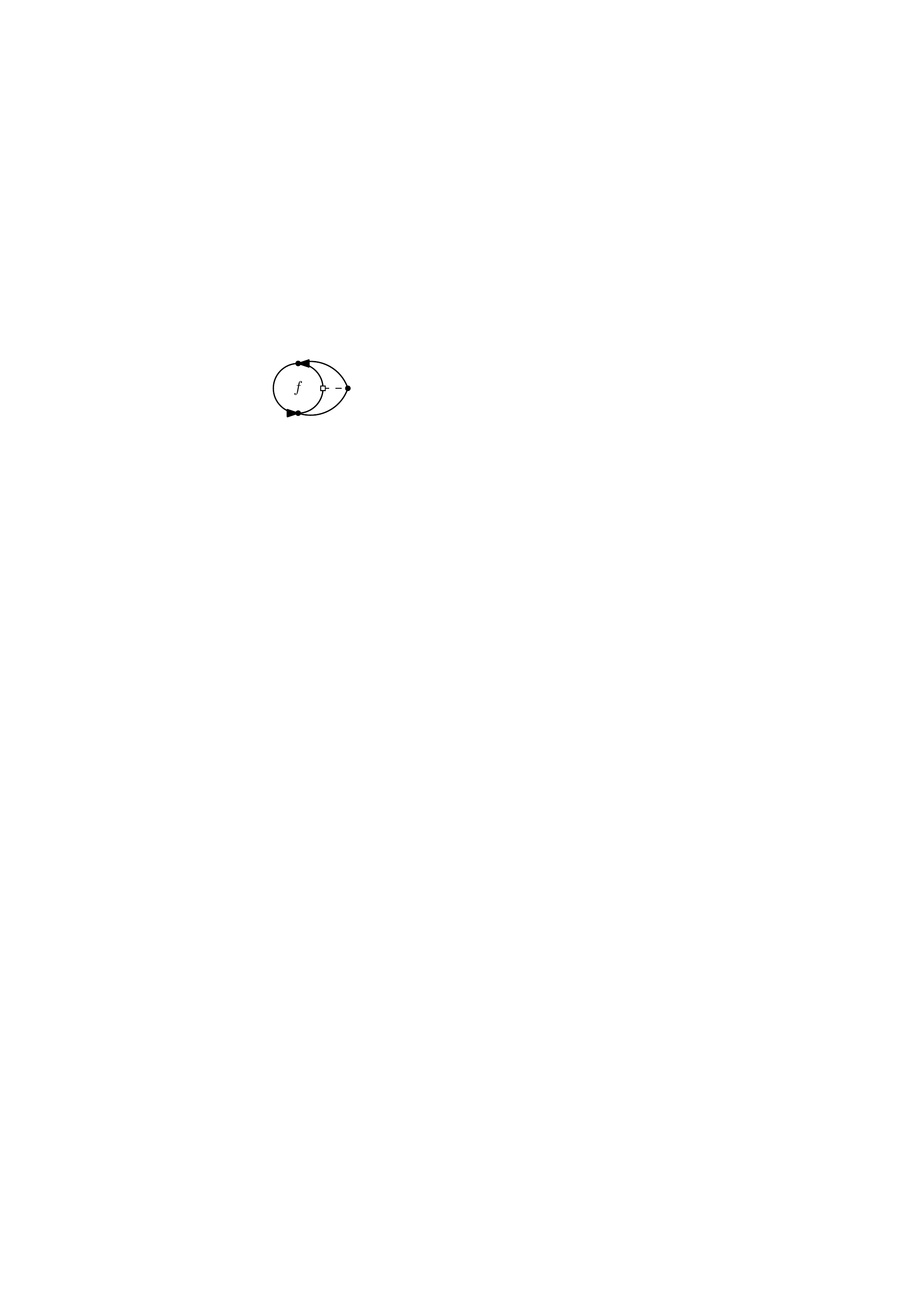}}\hfill    
    \caption{\label{FIG:canonical}Port assignment: (b) around neighbor $y$ of outer subdivision vertex $x$.}
  \end{figure}

  \paragraph{0-Modal Vertices.}    
  We  consider each 0-modal vertex $v$ to be 2-modal by adding a \emph{virtual
  edge} inside its designated face $f$. Namely, suppose $v$ is a source and let $e_1=(v,w_1)$ and
  $e_2=(v,w_2)$ be incident to $f$. We add a virtual
  edge $(w,v)$ between $e_1$ and $e_2$ from a new \emph{virtual vertex}
  $w$.  Of course, there is not literally a rectangle $R_w$
  representing $w$, but for the assignments of the edges to the ports
  of $v$, we assume that $R_w$ is the degenerate rectangle corresponding to the segment on the
  intersection of $R_{w_1}$ and $R_{w_2}$. See \cref{SUBFIG:virtualEdge}.

  \paragraph{2-Modal Vertices.}
  Let now $v$ be a 2-modal vertex.  We discuss the cases where we have to
  deviate from the canonical assignment. We call a side $s$ of a
  rectangle in the rectangular dual to be \emph{mono-directed},
  \emph{bi-directed}, or \emph{3-directed}, respectively, if there are
  0, 1, or 2 changes of directions of the edges across $s$. See \cref{FIG:assignmentPorts}. Observe
  that by 2-modality there cannot be more than two changes of
  directions.

  Consider first the case that $R_v$ has a side that is 3-directed, say the right side of
  $R_v$. See \cref{SUBFIG:3-directed}. If from top to bottom there are first outgoing edges followed by
  incoming edges and followed again by outgoing edges, then we assign from top to
  bottom first the North port, then the East port, and then the
  South port to the edges incident to rectangles on the right of $R_v$
  (\emph{counterclockwise switch}). Otherwise, we
  assign from top to bottom first the East port, then the South port,
  and then the West port (\emph{clockwise
    switch}). All other edges are assigned in a canonical way to the ports of $v$; observe
  that there is no~other~change~of~directions.

  \begin{figure}[t]
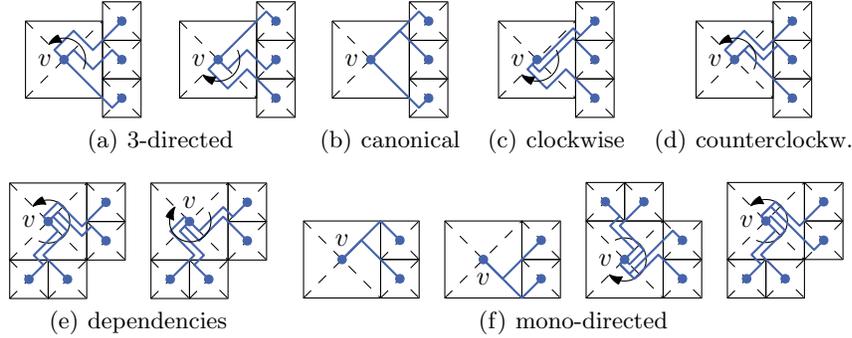

    \centering
    $\quad\quad$
    \subfigure[\label{SUBFIG:3-directed}3-directed]{\includegraphics[page=36]{figures}}\hfil
    \subcaptionhack
    \subfigure[\label{SUBFIG:canonicalBi}canonical]{$\;\;$\includegraphics[page=41]{figures}$\;\;$}\hfil  
    \subcaptionhack
    \subfigure[\label{SUBFIG:clockwise}clockwise]{$\;\;$\includegraphics[page=43]{figures}$\;\;$}\hfil
    \subcaptionhack
    \subfigure[\label{SUBFIG:counterc}counterclockw.]{$\quad\;\;\;$\includegraphics[page=42]{figures}$\quad\;\;\;$}\\
    \hfil
    \subfigure[\label{SUBFIG:dependenciesSides}dependencies]{\includegraphics[page=38]{figures}}\hfil
    \subcaptionhack
    \subfigure[\label{SUBFIG:monoDirected}mono-directed]{\includegraphics[page=39]{figures}}\hfil
    \caption{\label{FIG:assignmentPorts}Assignment of ports when the direction of edges incident to one side of a rectangle changes a) twice b-e) once, or f) never.} %
  \end{figure}

  Consider now the case that $R_v$ has one side that is bi-directed, say again the right side of $R_v$. If the order from top to
  bottom is first incoming then outgoing then assign the edges
  incident to the right side of $R_v$ in a canonical way (\emph{canonical switch}, \cref{SUBFIG:canonicalBi}). Otherwise \emph{(unpleasant switch)},  we
  have two options, we either assign the outgoing edges to the
  North port and the incoming edges to the East port
  (\emph{counter-clockwise switch}, \cref{SUBFIG:counterc}) or we assign the outgoing edges to
  the South port and the incoming edges to the West port
  (\emph{clockwise switch}, \cref{SUBFIG:clockwise}).

  Observe that if there is an unpleasant
  switch on one side of $R_v$ then
  there cannot be a canonical switch
  on an adjacent side. Assume now that there are two adjacent
  sides $s_1$ and $s_2$ of $R_v$ in this clockwise order around $R_v$ with
  unpleasant switches.
  Then we consider both switches
  as counterclockwise or both as clockwise.
  See \cref{SUBFIG:dependenciesSides}. 
  Observe that due to 2-modality two opposite sides of $R_v$ are neither
  both involved in unpleasant switches nor both in canonical switches.

  Consider now the case that one side $s$ of $R_v$ is mono-directed, say again the right side of $R_v$. See \cref{SUBFIG:monoDirected}. In most cases, we
  assign the edges incident to $s$ in a canonical way. There would be~--~up
  to symmetry~--~the following exceptions: The top side of $R_v$ was involved in a
  clockwise switch and the edges at the right side are incoming
  edges. In that case we have a \emph{clockwise switch} at $s$, i.e.,
  the edges at the right side are assigned to the West port of
  $v$. The bottom side of $R_v$ was involved in a counter-clockwise switch and the edges
  at the right side are outgoing edges. In that case we have a
  \emph{counter-clockwise switch} at $s$, i.e., the edges at the right
  side are assigned to the North port of $v$. 

  In order to avoid switches at mono-directed sides, we do the following:
  Let $s_1,s,s_2$ be three consecutive sides 
  in this clockwise order around the rectangle $R_v$ such that there is
  an unpleasant switch on side $s$; say $s$ is the right side of
  $R_v$, $s_1$ is the top and $s_2$ is the bottom, and the edges on
  the right side are from top to bottom first outgoing and then
  incoming. By 2-modality, there cannot be a switch of directions on
  both, $s_1$ and $s_2$, i.e., $s_1$ contains no incoming edges, or $s_2$
  contains no outgoing edges. In the first case, we opt for a
  counterclockwise switch for $s$, otherwise, we opt for a clockwise switch.

  \begin{figure}
  	\centering
    \includegraphics[page=16]{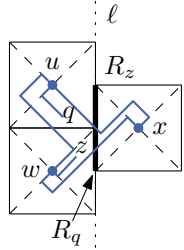}  
\caption{\label{SUBFIG:designatedcorner}\sc Extra Rule}
    \end{figure} 

  There is one exception to the rule in the previous paragraph (which we refer to as {\sc Extra Rule}): Let $u$ and $w$ be two adjacent
  0-modal vertices with the same designated face $f$ such that
   the two virtual end vertices are on one line
  $\ell$. See \cref{SUBFIG:designatedcorner}. Let $s_u$ be the side of $R_u$ intersecting $R_w$ and let
  $s_w$ be the side of $R_w$ intersecting $R_u$. Assume that $u$ has
  an unpleasant switch at $s_u$ and, consequently, $w$ has an
  unpleasant switch at $s_w$.  Let $x$ be the third vertex on $f$ and
  let $s_x$ be the side of $R_x$ intersecting $R_u$ and $R_w$. Observe
  that $s_x \subset \ell$.  Do the switch at $s_u$ and $s_w$ in clockwise
  direction if and only if the switch at $s_x$ is in clockwise
  direction, otherwise in counterclockwise direction.

  \begin{property}\label{PROP:mono-directed}
    There is neither a clockwise nor a
    counter-clockwise switch at a mono-directed side of a rectangle
    $R_v$ except if $v$ is one of the 0-modal vertices to which the
    {\sc Extra Rule} was applied.
  \end{property}

  \paragraph{4-Modal Vertices.}
  If $v$ is an inner 4-modal vertex of degree 4, then each side of
  $R_v$ is incident to exactly one rectangle, and we always
  use the canonical assignment. If $v$ is a
  4-modal vertex of degree 5, then $v$ is the inner
  vertex $y$ adjacent to the subdivision vertex $x$. Note that we do
  not have to draw the edge between $x$ and $y$. However, this case is
  still different from the previous one, since there are two rectangles 
  incident to the same side $s$ of
  $R_y$. If the switch at $s$ is canonical then there is no
  problem. Otherwise we do the assignment as~in~\cref{SUBFIG:outerx}.

  Observe that we get one edge between $y$ and a vertex on the outer
  face that is not assigned in a canonical way at $y$. But this edge
  is assigned in a canonical way at the vertex in the outer face. This
  completes the port assignments.

  \subsubsection{Correctness.}
  In \springerarxiv{the full version~\cite{arxivVersion}}{\cref{APP:correctness}}, we give a detailed proof that the constructed port
  assignment admits a perturbed generalized planar L-drawing and, thus, a planar
  L-drawing of $H$. The proof starts with the observation that each
  edge is assigned in a canonical way at one end vertex at least. Then
  we route the edges as indicated in \cref{FIG:routing}
  where each
  part of a segment that is not on a diagonal of a rectangle
  represents a perturbed orthogonal polyline of rotation 0. Finally,
  we show that the encircled bends are not contained in any other
  edge.
\begin{figure}[h!]
	\centering
	\includegraphics[page=13,width=\textwidth]{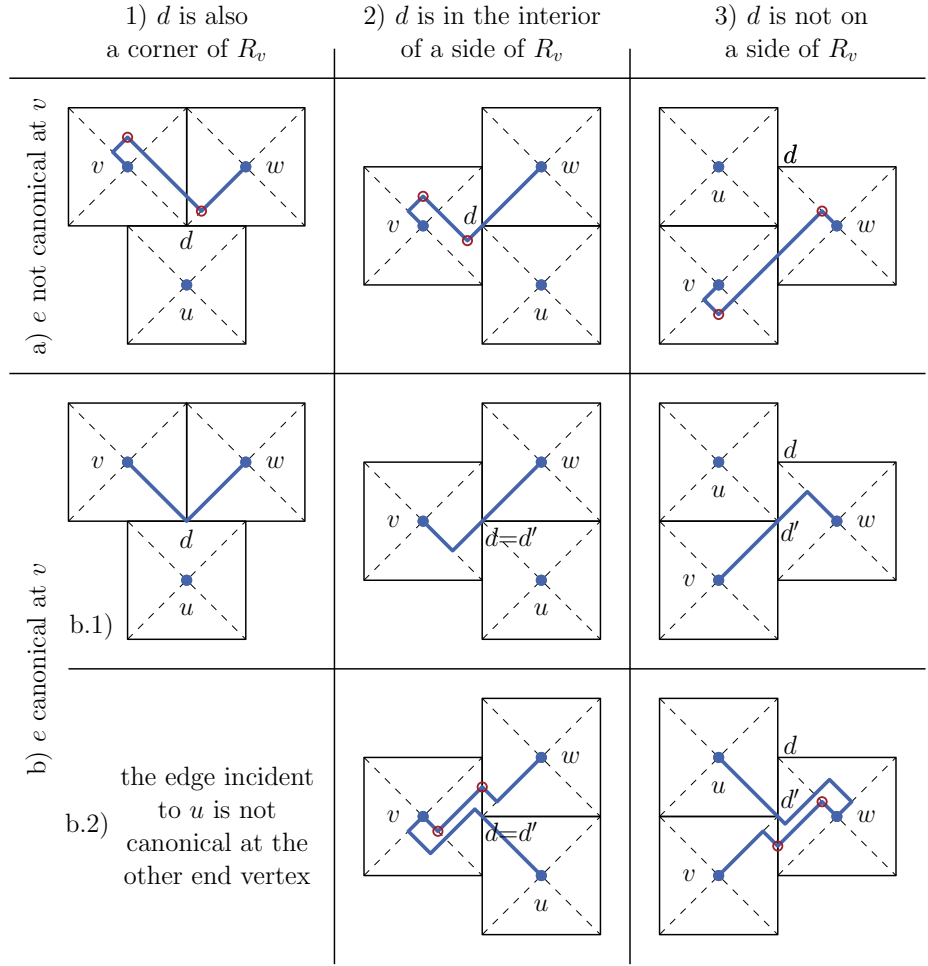}
	\caption{\label{FIG:routing}How to route the edge $e$ between $v$ and $w$. Point $d$ is the corner at the end of the diagonal of $R_w$ to which $e$ is assigned.}
\end{figure}

{
  \begin{lemma}\label{LEMMA:linearTime}	
	A planar L-drawing of $H$ in which the drawing of the outer face
    is $\Gamma_0$ and no face contains bad pincers of $G$ can be
    constructed in linear time.
\end{lemma}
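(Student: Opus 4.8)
The plan is to verify that every single step of the port-assignment algorithm described above, together with the subsequent drawing construction, can be executed in total linear time, and that correctness (already established in the referenced appendix) does not depend on the order in which components are processed. First I would recall that by Theorem~\ref{THEO:main}'s proof outline, $G$ has already been triangulated and decomposed into its $4$-connected components at separating triangles in linear time via Kant's decomposition~\cite{DBLP:journals/ijcga/Kant97}, so it remains to bound the work done per component $H$ and to argue that the components can be processed from outermost to innermost consistently. The sum of the sizes of the $4$-connected components is linear in $|G|$ (each edge of $G$ appears in at most two components, as a separating-triangle edge is shared by exactly two), so it suffices to show the per-component work is linear in $|H|$.

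Next I would walk through the per-component steps. Building the auxiliary irreducible triangulation $H_x$ costs $O(1)$ extra work on top of $H$ (subdividing one outer edge, adding one edge $e$ incident to $x$, and orienting $e$ by a local check of whether its other endpoint $y$ is $0$-, $2$-, or $4$-modal and whether the incident face is $y$'s designated face). Computing a rectangular dual $R$ of $H_x$ takes linear time by any of the cited algorithms~\cite{bhasker/sahni:88,biedl/derka:jgaa16,he:93,kant/he:wg93}; replacing the four outer rectangles by the configuration of \cref{FIG:RDouterface} and performing the global $90^\circ$ rotation is $O(1)$. For the port assignment itself: the canonical assignment of an edge $(v,u)$ is a single geometric comparison of the positions of $R_u$ and $R_v$, so all canonical assignments take $O(|H|)$ total. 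For each vertex $v$ we examine the four sides of $R_v$ and the sequence of direction changes of edges crossing each side; by $2$-modality there are at most two changes per side, and classifying a side as mono-/bi-/$3$-directed and choosing clockwise/counterclockwise/canonical/unpleasant switches is done by inspecting a constant-size neighborhood of each side together with the switches already chosen on adjacent sides. Summed over all vertices and edges this is $O(|H|)$. Identifying designated faces, inserting virtual edges for $0$-modal vertices (one per such vertex), and handling the degree-$5$ vertex $y$ adjacent to $x$ are likewise $O(1)$ per affected vertex, hence linear overall.

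The one step that needs slightly more care is the {\sc Extra Rule} and, more generally, the propagation of switch orientations: a switch on one side of $R_v$ may force a choice on an adjacent side, and the {\sc Extra Rule} couples the switches of two adjacent $0$-modal vertices $u,w$ with their common designated face $f$ to the switch at the third vertex $x$ of $f$. I would argue that these dependencies form, per rectangle, a constant-size constraint, and across all rectangles they can be resolved by a single pass: for each designated face store a pointer linking its (at most two) $0$-modal vertices, and when a switch is decided at $R_x$ propagate it in $O(1)$ to $R_u$ and $R_w$. Since each constraint involves a bounded number of sides and each side of each rectangle is touched a bounded number of times, the resolution is $O(|H|)$. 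Finally, once all ports are fixed, we invoke Theorem~\ref{THEO:drawing-triangulations}: constructing the auxiliary digraphs $X$ and $Y$ is linear, and a topological ordering of each is linear, yielding the coordinates and hence the planar L-drawing of $H_x$; removing $x$, the virtual edges, and (after all components are merged) the added vertices/edges of $G$ is linear. By \cref{REM:givenOuterFace} we may post-process so that the outer face of $H$ is drawn exactly as $\Gamma_0$, and the correctness argument from the appendix guarantees no face contains a bad pincer of $G$. Summing over all $4$-connected components gives the claimed linear total running time. The main obstacle I anticipate is making the "process outermost-to-innermost while the outer face of each child is already pinned down by its parent" bookkeeping explicit without incurring superlinear overhead from repeatedly re-examining shared separating-triangle edges; I would address this by storing, for each separating triangle, the drawing $\Gamma_0$ of the corresponding face as soon as the parent component is drawn, so each child reads it in $O(1)$.
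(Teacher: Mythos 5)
Your proposal follows the paper's proof exactly: correctness is delegated to the port-assignment construction and its analysis in the appendix, linearity comes from the linear-time rectangular-dual algorithms, the observation that the port assignment itself is computable in linear time, and the topological-ordering coordinate assignment of \cref{THEO:drawing-triangulations}. The paper states this far more tersely, but your added detail (constant work per side of each rectangle, $O(1)$ propagation for the {\sc Extra Rule}, pinning the outer face via \cref{REM:givenOuterFace}) is consistent with and fills in the same argument.
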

\begin{proof}
  The construction
  guarantees a planar L-drawing of $H$. The
  port assignment is such that there are no bad pincers and for the
  outer face it is the same as in $\Gamma_0$. A rectangular dual can
  be constructed in linear
  time~\cite{bhasker/sahni:88,kant/he:wg93}. The port assignment can
  also be done in linear time. Finally, the coordinates can be
  computed in linear time using topological
  ordering~--~see~\cref{THEO:drawing-triangulations}.
\end{proof}
}

\section{Outerplanar Digraphs}

Since
there exist outerplanar digraphs that do not admit any bimodal
embedding~\cite{DBLP:conf/esa/VialLG19}, we cannot exploit
\cref{THEO:main} to construct planar L-drawings for the graphs in this
class. However, we are able to prove the following.

\begin{theorem}\label{lem:outerplanar-l-planar}
  Every outerplanar graph admits a planar L-drawing.
\end{theorem}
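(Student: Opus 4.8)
The plan is to reduce an arbitrary outerplanar digraph to a case we can control and then route the edges directly. First I would add the outer face's unbounded region as a cycle and, more importantly, triangulate the outerplanar graph from the inside: pick an outer-boundary vertex $r$ of the (2-connected, say) outerplanar graph and add chords so that every inner face is a triangle incident to $r$, i.e.\ produce a \emph{fan triangulation} rooted at $r$. The resulting graph is a maximal outerplanar graph; equivalently its weak dual is a path. The point of choosing a fan is that we get total control over the rotation system: every added chord is incident to $r$, so all the ``modality trouble'' can be concentrated at $r$, while every other vertex sees only its two original outer-boundary edges plus a contiguous bundle of chords to~$r$.

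Next I would orient (or rather keep the orientation of) the graph and observe that, because the weak dual is a path $f_1,\dots,f_{n-2}$, the triangulated graph is a (directed) planar graph that can be built incrementally: $f_1$ is a triangle, and each $f_{i+1}$ glues a single new vertex $v_{i+1}$ along one edge of the already-built outer boundary. This is exactly the setting of an incremental vertex-addition drawing. I would then build the planar L-drawing one vertex at a time, placing $v_{i+1}$ so that its vertical and horizontal segments reach the already-placed edge it attaches to. The key invariant to maintain is that the current outer boundary is drawn as a monotone staircase (an x- and y-monotone orthogonal polyline) so that there is always ``room'' to attach the next \textsf{T} to any of its edges; by \cref{LEMMA:Ltriangles} (resp.\ \cref{FIG:RDouterface}) we know every triangle has a planar L-drawing, and by \cref{REM:givenOuterFace} we may assume any prescribed L-drawing of that attaching edge, so the attachment of each successive triangle is always feasible. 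Because each new vertex is inserted into a triangular face, we can route its (at most two original) incident edges and its chord to $r$ by the same local gadget as in \cref{FIG:routing}; overlaps are allowed at shared ports, so the bundle of chords at $r$ causes no crossing.

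After the triangulated drawing is obtained I would delete the added chords and the outer cycle; removing edges from a planar L-drawing keeps it planar, and the remaining drawing realizes the original outerplanar digraph. The only loose ends are the reductions: handling graphs that are not 2-connected (process each block independently and merge at cut vertices, using the freedom in \cref{REM:givenOuterFace} to place blocks in disjoint axis-aligned regions), and handling 2-cycles (a 2-cycle in an outerplanar graph bounds a face, so the two parallel edges can be drawn as two \textsf{L}s sharing both endpoints' ports on opposite sides, analogously to \cref{SUBFIG:zigzagsplit}). The statement explicitly only promises a planar L-drawing, \emph{not} an outerplanar one, so I am free to let $r$'s chords cut across the interior and to nest blocks; indeed the fan construction will in general \emph{not} keep every vertex on the outer face, which matches the remark in the introduction that outerplanar digraphs need not admit an outerplanar planar L-drawing.

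The main obstacle I expect is modality at the root $r$: after adding a full fan of chords, $r$ may become highly non-bimodal (its incident edges can alternate in/out many times), so \cref{THEO:main} does not apply and I cannot simply invoke the rectangular-dual machinery. The fix is to \emph{not} triangulate blindly but to choose which chords to add, and in which rotational position, so that around $r$ the incoming and outgoing chords interleave with the two boundary edges of each neighbor in a way that the local \textsf{T}-gadget of \cref{FIG:routing} can still absorb — concretely, each chord $(r,v)$ or $(v,r)$ only needs one free port at $r$ and one at $v$, and since the chords at $r$ are radially sorted we can assign them to $r$'s four ports in cyclic blocks (North/East/South/West) consistently with their directions. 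Verifying that this port assignment at $r$ is globally consistent with the staircase invariant — i.e.\ that the chords to $r$ never need to cross the staircase — is the delicate point, and is where the outerplanarity (weak dual is a path, so all chords emanate from one vertex on the convex boundary) is really used.
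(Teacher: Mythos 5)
Your programme has a genuine gap at exactly the point you flag as ``the delicate point,'' and that gap is not a technicality but the crux. After fan-triangulating at $r$, the vertex $r$ can have arbitrarily many in/out alternations, so none of the paper's machinery you invoke is available: \cref{THEO:main} requires bimodality, and \cref{REM:givenOuterFace} only lets you \emph{reshape the outer triangle of an already existing} planar L-drawing --- it does not let you attach a new triangle to an interior edge of a partial drawing. Your proposed fix, reassigning the chords at $r$ to the four ports ``in cyclic blocks consistently with their directions,'' silently changes the rotation system at $r$, so the planarity of the fan embedding no longer transfers to the drawing; you would have to prove from scratch that the resulting assignment is crossing-free and compatible with the staircase invariant, and that invariant itself is never shown to be maintainable (attaching $v_{i+1}$ to an arbitrary boundary edge, with arbitrary edge directions, can force a port that destroys monotonicity). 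As written, the proof does not go through.

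The paper's proof avoids all of this with a one-line construction: place the vertices on a diagonal in the cyclic order in which they appear on the outer face; each edge is then forced, its bend landing above or below the diagonal according to which endpoint is the tail, and two edges on the same side cross only if their endpoint pairs interleave along the outer cycle, which outerplanarity forbids. This is a two-page book-embedding argument in disguise, needs no triangulation, no port bookkeeping, and handles 2-cycles for free (the two parallel edges land on opposite sides of the diagonal). Your instinct that outerplanarity is the structural fact doing the work is right, but routing everything through a fan manufactures the very high-modality obstruction at $r$ that the direct construction never encounters.
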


\begin{proof}
  Put all vertices on a diagonal in the order in which they appear on
  the outer face~--~starting from an arbitrary vertex. The drawing of
  the edges is determined by the direction of the edges. This
  implies that some edges are drawn above and some below the
  diagonal. By outerplanarity, there are no
  crossings.%
\end{proof}

We remark that \cref{lem:outerplanar-l-planar} provides an alternative
proof to the one in~\cite{DBLP:conf/esa/VialLG19} that any outerplanar
digraph admits a $4$-modal embedding. Observe that the planar L-drawings constructed in the proof of \cref{lem:outerplanar-l-planar} are not necessarily outerplanar. In the following, we prove that this may be unavoidable.

\begin{theorem}
	Not every outerplanar graph admits an outerplanar L-drawing.
\end{theorem}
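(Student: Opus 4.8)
The goal is to exhibit a single outerplanar digraph $G$ for which every outerplanar L-drawing is impossible, i.e.\ for which no planar L-drawing places all vertices on the outer face. The natural strategy is to pick the smallest non-trivial outerplanar structure that already forces a conflict, namely a triangle with a directed boundary (a directed 3-cycle) together with one chord-like attachment, or more simply a fan/triangulated polygon with carefully chosen orientations, and then argue about the available port assignments. First I would recall \cref{LEMMA:Ltriangles}: up to symmetry there are only finitely many planar L-drawings of a triangle, and I would note from \cref{FIG:RDouterface} which of these drawings have all three vertices incident to the outer (unbounded) region versus which ones "hide" a vertex. The key combinatorial fact to extract is that in an L-drawing of a triangle one of the three vertices always ends up enclosed by the bend-region of the other two unless the triangle is not a directed cycle — in particular, a directed $3$-cycle has no planar L-drawing at all in which the triangular face is the outer face with all three vertices free, because the three L-shapes must "spiral."

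Concretely I would take $G$ to be a maximal outerplanar graph on four vertices — an outerplanar triangulation of a quadrilateral, i.e.\ two triangles sharing an edge — with an orientation chosen so that (a) $G$ is outerplanar as an undirected graph, (b) the bimodality/4-modality constraints still permit a planar L-drawing (so the statement is not vacuous: by \cref{lem:outerplanar-l-planar} it does admit one), but (c) every planar L-drawing is forced to route at least one edge so that a vertex of $G$ lies in a bounded face. The argument then is: in any planar L-drawing of $G$, consider the two triangular faces $T_1, T_2$ sharing the diagonal edge $d$. By \cref{LEMMA:Ltriangles}, each $T_i$ is drawn as one of the canonical triangle L-drawings; a short case analysis on the port assignment of $d$ (the two endpoints of $d$ each use one of North/South and one of East/West, and the direction of $d$ fixes which) shows that the only drawings of $T_1$ and $T_2$ that are consistent along $d$ force the two "apex" vertices onto opposite sides of the line spanned by the L of $d$, and at least one of them is separated from the outer region by the polyline of $d$ together with the two edges of the other triangle. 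Hence that apex vertex is not on the outer face.

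**Main obstacle.** The delicate part is the case analysis ensuring that the obstruction genuinely survives for \emph{every} valid port assignment and is not merely an artifact of one convenient drawing: one must show that the flexibility in reassigning edges to ports (left-bending vs.\ right-bending, the freedom of \cref{THEO:drawing-triangulations} to reorder coordinates) cannot be used to push all four vertices to the outer face simultaneously. I expect this to reduce, after fixing the orientation of the shared diagonal $d$, to checking a handful of port assignments for the four boundary edges and the diagonal, using the characterization in \cref{FIG:RDouterface} of which face is "outer" in each triangle L-drawing. I would organize this as: (1) fix orientations of $G$ making it a legitimate example; (2) observe $G$ has a planar L-drawing via \cref{lem:outerplanar-l-planar}; (3) suppose for contradiction an outerplanar L-drawing exists; (4) split along $d$, apply \cref{LEMMA:Ltriangles} to each side, enumerate the compatible pairs of triangle drawings; (5) in each surviving case identify the vertex that the L of an edge encloses, contradicting outerplanarity. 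The example and the enumeration should both be small enough to present with one figure and a short table of cases rather than lengthy computation.
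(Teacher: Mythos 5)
There is a genuine gap, and it lies in the combinatorial fact your whole case analysis rests on. By \cref{LEMMA:Ltriangles}, every planar L-drawing of a triangle --- whether or not it is a directed cycle --- is a simple closed orthogonal polygon; in particular a directed $3$-cycle is drawn as a hexagon with one $3\pi/2$ angle and five $\pi/2$ angles. All three vertices lie on that closed curve and are therefore incident to both its bounded and its unbounded face. So it is false that ``one of the three vertices always ends up enclosed by the bend-region of the other two,'' and false that a directed $3$-cycle admits no planar L-drawing with all vertices on the outer face. More generally, a single edge is an open polyline and cannot enclose a vertex: a vertex misses the outer face only if it lies strictly inside (or has its full $2\pi$ angle consumed by) cycles avoiding it or inner faces. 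In your diamond ($K_4$ minus an edge) the unique outerplanar embedding makes both triangles empty inner faces, every vertex has degree at most $3$, and the only way a vertex could be cut off is if its two outer-face edges were forced onto the same port with the outer face on the pinched side --- which is never forced and, even when it happens, still leaves the vertex incident to the outer face with a $0^\circ$ angle (an incidence the paper explicitly admits). Your enumeration of compatible triangle drawings along the diagonal would therefore terminate without producing a contradiction.

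The paper's obstruction is of a different nature and necessarily requires a larger graph. It exhibits an outerplanar digraph with a unique outerplanar embedding containing an inner face $f$ of degree $6$ all of whose vertices are $4$-modal and are source or sink switches of $f$. For such a vertex the two edges of $f$ must attach to the same port, so every vertex angle in $f$ is forced to be $0^\circ$; each of the six edges contributes one bend of angle at most $3\pi/2$, so the angle sum around $f$ is at most $9\pi$, while a bounded face with $12$ corners requires $(2\cdot 6-2)\pi=10\pi$. Forcing $0^\circ$ angles needs $4$-modal (hence degree at least $4$) vertices, and a face all of whose vertices are switches has even degree, so degree $6$ is the smallest that breaks the inequality --- this is exactly why no four-vertex example, and no example with only low-degree vertices, can work. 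To repair your plan you would need to replace the ``trapped apex'' mechanism by such an angular-sum argument on a high-degree inner face.
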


\centerline{\includegraphics[page=33]{figures}}

\begin{proof}
  Consider the graph depicted above. It has a unique outerplanar
  embedding. Let $f$ be the inner face of degree 6.  Each vertex incident
  to $f$ is 4-modal and is a source switch or a sink switch of
  $f$. Thus, the angle at each vertex is $0^\circ$. The angle at each
  bend is at most $3/2 \pi$. Thus, the angular sum around $f$ would
  imply $ (2\cdot\deg f-2)\cdot \pi \leq 3/2 \cdot \deg f \cdot \pi, $
  which is not possible for $\deg f = 6$.
\end{proof}

There are even 4-modal biconnected internally triangulated
outerplane digraphs that do not admit an outerplanar L-drawing. See \springerarxiv{the full version~\cite{arxivVersion}}{\cref{APP:outer}}.

\section{Open Problems}

\begin{itemize}
\item Are there bimodal graphs with 2-cycles that do not admit a
  planar L-drawing (with or without the given embedding)?
\item What is the complexity of testing whether a 4-modal graph admits
  a planar L-drawing with a fixed embedding?
\item In the directed Kandinsky model where edges leave a vertex to
  the top or the bottom and enter a vertex from the left or the right,
  for which $k$ is there always a drawing with at most $1+2k$ bends
  per edge for any 4-modal graph? $k=0$ does not suffice. What about
  $k=1$?
\item Can it be tested efficiently whether an outerplanar graph with a
  given 4-modal outerplanar embedding admits an outerplanar L-drawing?
\end{itemize}

\bibliographystyle{splncs04}
\bibliography{main}

\begin{thebibliography}{10}
\providecommand{\url}[1]{\texttt{#1}}
\providecommand{\urlprefix}{URL }
\providecommand{\doi}[1]{https://doi.org/#1}

\bibitem{DBLP:journals/ijfcs/KariOALBDPRT18a}
Angelini, P., {Da Lozzo}, G., {Di Bartolomeo}, M., {Di Donato}, V., Patrignani,
  M., Roselli, V., Tollis, I.G.: Algorithms and bounds for {L}-drawings of
  directed graphs. Int. J. Found. Comput. Sci.  \textbf{29}(4),  461--480
  (2018). \doi{10.1142/S0129054118410010}

\bibitem{barth_etal:gd06}
Barth, W., Mutzel, P., Yildiz, C.: A new approximation algorithm for bend
  minimization in the {K}andinsky model. In: Kaufmann, M., Wagner, D. (eds.)
  GD~2006. LNCS, vol.~4372, pp. 343--354. Springer (2007).
  \doi{10.1007/978-3-540-70904-6\_33}

\bibitem{DBLP:conf/esa/VialLG19}
{Besa Vial}, J.J., {Da Lozzo}, G., Goodrich, M.T.: Computing k-modal embeddings
  of planar digraphs. In: Bender, M.A., Svensson, O., Herman, G. (eds.) {ESA}
  2019. LIPIcs, vol.~144, pp. 19:1--19:16. Schloss Dagstuhl - Leibniz-Zentrum
  f{\"{u}}r Informatik (2019). \doi{10.4230/LIPIcs.ESA.2019.19}

\bibitem{bhasker/sahni:88}
Bhasker, J., Sahni, S.: A linear algorithm to find a rectangular dual of a
  planar triangulated graph. Algorithmica  \textbf{3},  247--278 (1988).
  \doi{10.1007/BF01762117}

\bibitem{biedl/derka:jgaa16}
Biedl, T.C., Derka, M.: The (3,1)-ordering for 4-connected planar
  triangulations. JGAA  \textbf{20}(2),  347--362 (2016).
  \doi{10.7155/jgaa.00396}

\bibitem{biedl/mondal:arxiv17}
Biedl, T.C., Mondal, D.: A note on plus-contacts, rectangular duals, and
  box-orthogonal drawings. Tech. Rep. arXiv:1708.09560v1, Cornell University
  Library (2017)

\bibitem{DBLP:journals/tcs/BinucciDP14}
Binucci, C., Didimo, W., Patrignani, M.: Upward and quasi-upward planarity
  testing of embedded mixed graphs. Theor. Comput. Sci.  \textbf{526},  75--89
  (2014). \doi{10.1016/j.tcs.2014.01.015}

\bibitem{blaesius_etal:esa14}
Bl{\"a}sius, T., Br{\"u}ckner, G., Rutter, I.: Complexity of higher-degree
  orthogonal graph embedding in the {K}andinsky model. In: Schulz, A.S.,
  Wagner, D. (eds.) ESA 2014. LNCS, vol.~8737, pp. 161--172. Springer (2014).
  \doi{10.1007/978-3-662-44777-2\_14}

\bibitem{brueckner:BA}
Br{\"u}ckner, G.: Higher-Degree Orthogonal Graph Drawing with Flexibility
  Constraints. {B}achelor thesis, Department of Informatics, Karlsruhe
  Institute of Technology (2013)

\bibitem{chaplick_etal:gd17}
Chaplick, S., Chimani, M., Cornelsen, S., {Da Lozzo}, G., N{\"o}llenburg, M.,
  Patrignani, M., Tollis, I.G., Wolff, A.: Planar {L}-drawings of directed
  graphs. In: Frati, F., Ma, K.L. (eds.) GD~2017. LNCS, vol. 10692, pp.
  465--478. Springer (2018). \doi{10.1007/978-3-319-73915-1\_36}

\bibitem{fraysseix/mendez/rosenstiehl:94}
{de~Fraysseix}, H., {Ossona de~Mendez}, P., Rosenstiehl, P.: On triangle
  contact graphs. Combinatorics, Probability, and Computing  \textbf{3}(2),
  233--246 (1994). \doi{10.1016/j.comgeo.2017.11.001}

\bibitem{eiglsperger/kaufmann:gd01}
Eiglsperger, M., Kaufmann, M.: Fast compaction for orthogonal drawings with
  vertices of prescribed size. In: J{\"u}nger, M., Mutzel, P. (eds.) GD~2001.
  LNCS, vol.~2265, pp. 124--138. Springer (2002).
  \doi{10.1007/3-540-45848-4\_11}

\bibitem{foessmeier/kaufmann:gd95}
F{\"o}{\ss}meier, U., Kaufmann, M.: Drawing high degree graphs with low bend
  numbers. In: Brandenburg, F.J. (ed.) GD 1995. LNCS, vol.~1027, pp. 254--266.
  Springer (1996). \doi{10.1007/BFb0021809}

\bibitem{he:93}
He, X.: On finding the rectangular duals of planar triangular graphs. SIAM J.\
  on Computing  \textbf{22}(6),  1218--1226 (1993). \doi{10.1137/0222072}

\bibitem{DBLP:journals/ijcga/Kant97}
Kant, G.: A more compact visibility representation. Int. J. Comput. Geometry
  Appl.  \textbf{7}(3),  197--210 (1997). \doi{10.1142/S0218195997000132}

\bibitem{kant/he:wg93}
Kant, G., He, X.: Two algorithms for finding rectangular duals of planar
  graphs. In: van Leeuwen, J. (ed.) WG 1993. LNCS, vol.~790, pp. 396--410.
  Springer (1994). \doi{10.1007/3-540-57899-4\_69}

\bibitem{kozminski/kinnen:networks85}
Koźmiński, K., Kinnen, E.: Rectangular duals of planar graphs. Networks
  \textbf{15}(2),  145--157 (1985). \doi{10.1002/net.3230150202}

\bibitem{tamassia:87}
Tamassia, R.: On embedding a graph in the grid with the minimum number of
  bends. SIAM J.\ on Computing  \textbf{16},  421--444 (1987).
  \doi{10.1137/0216030}

\bibitem{tutte63}
Tutte, W.T.: How to draw a graph. Proceedings of the London Mathematical
  Society  \textbf{s3-13}(1),  743--767 (1963). \doi{10.1112/plms/s3-13.1.743}

\end{thebibliography}

\springerarxiv{\end{document}}{}
\appendix

\chapter*{Appendix}

\section{Proof of Some Lemmas}\label{APP:lemmas}

\begin{lemma}
	\label{LEMMA:quadrangulate}
	Let $G$ be a plane digraph. We can augment $G$ by
	adding edges to obtain a biconnected plane digraph with
	face-degree at most four such that each $k$-modal vertex remains
	$k$-modal if $k>0$ and each 0-modal vertex gets at most 2-modal.
	The construction neither introduces parallel edges nor 2-cycles.
    Moreover, each 4-cycle bounding a face consists of alternating source and sink switches.
\end{lemma}

\begin{proof}
  Let $G=(V,E)$ be a plane directed graph.
\begin{enumerate}
\item If $G$ is not connected, let $G_1$ be a connected component of
  $G$ such that the outer face $f_o^1$ of $G_1$ (considered as an open
  region) contains vertices of $G$.  Pick a vertex $v$ incident to
  $f_o^1$ with the property that $v$ is an isolated vertex or $v$ is
  the tail of an edge incident to $f_o^1$.  Let $f$ be the face of $G$
  that is contained in $f_o^1$ and that is incident to $v$. Pick a
  vertex $w$ incident to $f$ that is not in $G_1$ and such that $w$ is
  isolated or the head of an edge incident to $f$. Add the edge
  $(v,w)$.
\item If $G$ contains a cut vertex $v$, let $w_1$ and $w_2$ be two
  consecutive neighbors of $v$ in different biconnected
  components. Let $f$ be the face between the edges connecting $v$ to $w_1$ and $w_2$, respectively. If
  we can add the edge $(w_1,w_2)$ or $(w_2,w_1)$ such that the
  modalities of $w_1$ and $w_2$ do not increase, if they had been
  positive before, we do so. Otherwise we may assume that the edges
  incident to $f$ and $w_1$ and $w_2$ are all outgoing edges of $w_1$
  and $w_2$, respectively, and that the degree of $w_2$ is at least
  two. Let $w_3 \neq v$ be a neighbor of $w_2$ on $f$. Add $(w_1,w_3)$
  to $G$.
\item If $G$ contains a face $f$ of degree greater than four, let
  $v_1,\dots,v_k$ be the facial cycle of $f$ such that $v_1$ is the
  tail of an edge incident to $f$.  Let $i = 3,4$ be minimum such that
  $v_i$ is 0-modal or incident to an incoming edge on $f$. If neither
  $(v_1,v_i)$ nor $(v_i,v_1)$ is present in $G$ then add $(v_1,v_i)$
  to $G$. Otherwise we can add an edge between $v_2$ and $v_{4}$ or
  $v_{5}$ if $i=3$ or between $v_5$ and $v_3$ or $v_2$ if $i=4$.
\end{enumerate}
\end{proof}

\Ltriangles*
\begin{proof}
  We first consider a triangle $T$ that is not a directed cycle. Let
  $t$ be the sink, $s$ the source, and $w$ the third vertex of $T$.
  We may assume that $\left<s,w,t\right>$ is the counter-clockwise
  cyclic order of vertices around $T$ and that the edge $(s,w)$ uses
  the North port of $s$~--~the other cases being symmetric. We
  distinguish two cases.
  \begin{enumerate}
  \item $w$ is to the right of $s$. In this case $t$ cannot be below
    $w$: otherwise it is not possible to close $T$ in
    counter-clockwise direction with only one bend per edge. For the
    x-coordinate of $t$ there are three possibilities: $t$ is to the
    right of $w$ (a), between $s$ and $w$ (c), and to the left of $s$ (b).
  \item $w$ is to the left of $s$. Similar as in the first case, $t$
    cannot be above $w$ in this case. If $t$ is below $s$, its
    x-coordinate can be to the left of $w$ (g), between $w$ and $s$ (e), or to
    the right of $s$ (d). If $t$ is vertically between $w$ and $s$, its
    x-coordinate is between $w$ and $s$ (h) or to the left of $w$ (f).
  \end{enumerate}
  Consider now the case that triangle $T$ is a directed cycle. Then no
  two edges of $T$ can use the same port. Thus, $T$ is drawn as a
  6-gone. I.e., the angular sum is $4\pi$ which implies that there is
  one $3\pi/2$ angle and five $\pi/2$ angles. We distinguish the case
  where the $3\pi/2$ angle is at a vertex (y) or at a bend (x).
\end{proof}

\genL*
\begin{proof}
  A planar L-drawing is a planar generalized L-drawing. So assume that
  a planar generalized L-drawing $\Gamma$ of a digraph $G$ is
  given. We consider $\Gamma$ as an orthogonal drawing of a new graph
  $G'$~--~see \cref{SUBFIG:subdivided}. To this end, we replace
  every bend in $\Gamma$ that is contained in the polyline
  representing another edge by a dummy vertex. Consider now an edge of
  $G$ that has more than one bend. This edge is decomposed in $G'$
  into an initial straight-line path $P_\text{init}$, an edge $e_\text{mid}$, and a
  final straight-line path $P_\text{final}$. The edge
  $e_\text{mid}$ is represented by an orthogonal polyline with the same
  bends as mid$(e)$.  Using the flow model of
  Tamassia~\cite{tamassia:87}, we can remove
  all but rot(mid($e$))
  bends from $e_\text{mid}$. Now the cases are two: If none of the end
  vertices of $e_\text{mid}$ is a bend of $e$ then $\rot(e)=\pm 1$
  and, thus, $e$ ends up with exactly one bend. If the first or the
  last bend of $e$ is an end vertex of $e_\text{mid}$ then
  $\rot(e_\text{mid})=0$ and $e_\text{mid}$ ends up with no
  bends. Thus, $e$ has exactly one bend at exactly one end vertex of
  $e_\text{mid}$.
\end{proof}

\perturbed*
\begin{proof}
	Let a perturbed generalized planar L-drawing $\Gamma$ of an irreducible triangulation $G$ be
	given.  By \cref{LEMMA:genL}, it suffices to show that $G$ has
	a generalized planar L-drawing. First, we construct a graph $G'$ of
	maximum degree 4 by subdividing the edges at all bends and
	intersections with the boundary of a rectangle. Observe that now
	every edge lies in the interior of one rectangle. We approximate
	each edge $e$ of $G'$ arbitrarily close with an orthogonal polyline
	rotated by $45^\circ$ in such a way that the following properties
	hold: No two polylines of two edges cross and the polyline of an
	edge is not self-intersecting. The rotation of any polyline is
	zero. The segments incident to the end vertices have both slope
	$-45^\circ$ if $e$ is parallel to $\backslash_v$ and slope
	$45^\circ$ if $e$ is parallel to $\slash_v$, where $R_v$ is the
	rectangle containing $e$. See \cref{FIG:zigzag}.
	Finally, rotate
	the drawing by $45^\circ$ in clockwise direction.  We obtain a drawing that fulfills all properties of a 
	generalized planar L-drawing of $G$, except that edges might overlap in a prefix or a suffix with rotation 0 that might, however, not be a straight-line segment.
	We fix this as follows. Let $v$ be a vertex and let $\left<e_1,\dots,e_k\right>$ be the sequence of edges assigned to a port of $v$  in clockwise order. Assume without loss of generality that $e_1,\dots,e_k$ are outgoing edges of $v$. Let $1 \leq m \leq k$ be such that init$(e_m)$ is longest. We redraw each edge $e_i$ with $i \neq m$ so that the common part of $e_i$ and $e_m$ lies on the first segment and the lengths of init$(e_1),\dots,$init$(e_{m-1})$ are still increasing and those of init$(e_{m+1}),\dots,$init$(e_{k})$ are still decreasing. The rest of $e_i$ is drawn arbitrarily close to $e_m$ in such a way that the rotation of mid$(e_i)$ is maintained. See \cref{SUBFIG:zigzagsplit}.
	  Observe that the bends of the original drawing still correspond to
	bends of the constructed drawing and have the same turns (left
	or right).  Conditions~2+3 are still fulfilled.
\end{proof}

\section{Proof of \cref{THEO:drawing-triangulations}}\label{APP:coordinates}
\drawingTriangulation*
We use the following lemma in order to prove \cref{THEO:drawing-triangulations}. 
\begin{lemma}\label{LEMMA:planarIfFaces}
	A drawing of a plane biconnected graph is planar if the ordering of
	the edges around each vertex is respected and the boundary of each
	face is drawn crossing-free.
\end{lemma}
\begin{proof}
  The lemma can be proven by the same proof idea as in the proof of
  Tutte~--~see Item~9.1 on page 758 of \cite{tutte63}.  Let $G$ be a
  plane biconnected graph and let $\Gamma$ be a drawing of $G$ in
  which each face is drawn crossing-free.  Assume two edges $e_1$ and
  $e_2$ cross. Consider two faces $f_1$ and $f_2$ whose boundary
  contains $e_1$ and $e_2$, respectively. (We consider the boundary of
  a face not to be part of the face. In particular are faces open
  regions). Then there must be a point $q \in f_1 \cap f_2$. But this
  is impossible: For each point $p$ in the plane let $\delta(p)$ be
  the number of faces containing $p$. Since the inner faces are
  bounded, there must be a point $q_o$ that is only contained in the
  outer face and thus, $\delta(q_o)=1$. Consider a curve $\ell$ from
  $q_o$ to $q$ that does not contain vertices. Traversing $\ell$, the
  count $\delta$ does not change if no edge is crossed. If we cross an
  edge then we leave a face and enter another face. Thus, the count
  will not change either, contradicting $\delta(q)\geq 2$.
\end{proof}

\begin{figure}
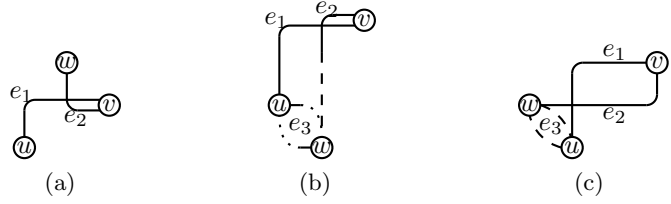

	\centering
	$ $\hfill
	\subfigure[]{\includegraphics[page=17]{figures}}
	\hfill
	\subcaptionhack\subfigure[]{\includegraphics[page=18]{figures}}
	\hfill
	\subcaptionhack\subfigure[]{\includegraphics[page=13]{figures}}
	\hfill $ $
	\caption{\label{FIG:drawing-triangulation}A triangular face cannot self-intersect if all edges have the correct shape.}
\end{figure}

Consider two edges $e_1$ and $e_2$ in a planar L-drawing that are
assigned to the same port of a vertex $v$ and let $f$ be the face
between $e_1$ and $e_2$. Then the bend at $e_1$ or $e_2$ must be
concave in $f$. This is a subcondition of the so called
bend-or-end property of Kandinsky drawings and we refer to it as the
\emph{concave-bend condition}.

\begin{proof}[of \cref{THEO:drawing-triangulations}]
	It suffices to show that all faces are drawn in a planar
	way. Assume there are two edges $e_1$ and $e_2$ incident to the same face that cross.  
	Since all faces are triangles $e_1$ and $e_2$ are incident. Assume without loss of
	generality that out$(e_1)=$ North, in$(e_1)=$ West, and that the
	head $v$ of $e_1$ is an end vertex of $e_2$. We distinguish three
	ways $e_2$ could cross $e_1$ (see
	\cref{FIG:drawing-triangulation}): 
	\begin{enumerate}[(a)]
	\item 
	  $v$ is the head of $e_2$,
	in$(e_2)=$ West, out$(e_2)=$ South, and $e_2$ is before $e_1$ in
	the clockwise order around $v$, or 
	\item 
	$v$ is the head of $e_2$,
	in$(e_2)=$ West, out$(e_2)=$ North, and $e_2$ is after $e_1$ in the
	clockwise order around $v$, or 
	\item 
	$v$ is the tail of $e_2$,
	out$(e_2)=$ South,  in$(e_2)=$ East,  the head $w$ of $e_2$ is
	to the left, and above of the tail $u$ of $e_1$.
    \end{enumerate} 
    Situation~(a)
	violates the concave-bend condition. A crossing in Situation~(b)
	implies that the edge $e_3$ closing the triangular face is either
	$(u,w)$ with in$(e_3)=$ West or $(w,u)$ with in$(e_3)=$ East. 
	However, this port assignment would not be one that admits a planar L-drawing. Finally, a crossing in Situation~(c)  implies that the edge $e_3$
	closing the triangular face is either $(u,w)$ with in$(e_3)=$ East
	and out$(e_3)=$ North or $(w,u)$ with in$(e_3)=$ West and out$(e_3)=$
	South. Again, this port assignment would not be one that admits a planar L-drawing. 
\end{proof}

Observe that in general it does not suffice to only consider the
right drawing of each edge in order to obtain a planar L-drawing even if the port assignment admits such a drawing, not even for a
directed 4-cycle (\cref{FIG:4-cycle}) or a 4-cycle consisting only of sink and source switches
(\cref{FIG:4-cycle-noinout}).

\begin{figure}
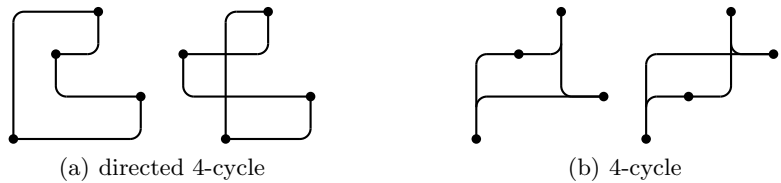

  \centering
  $ $\hfill
  \subfigure[\label{FIG:4-cycle}directed 4-cycle]{\includegraphics[page=10]{figures}}
  \hfill
  \subcaptionhack\subfigure[\label{FIG:4-cycle-noinout}4-cycle]{\includegraphics[page=11]{figures}}
  \hfill
  \caption{Given a port assignment that admits a planar L-drawing, not
    every L-drawing that realizes it is also planar. }
\end{figure}

\section{Correctness of the Port Assignment in \cref{SEC:4connected}}\label{APP:correctness}
  
  \begin{lemma}\label{LEMMA:canonical} 
  	Any edge $e=(u,w)$ is assigned in a canonical way at $u$ or $w$.
  \end{lemma}
  \begin{proof}
  Suppose, for contradiction, that there is an edge $e=(u,w)$ that is neither assigned in a canonical way at $u$ nor at
  $w$.
  Assume, without loss of generality, that $R_u$ is on top of $R_w$. This implies that $e$ is assigned to the North port of $u$ and to the
  West port of $w$. Moreover, the bottom side $s_u$ of $R_u$ is involved in a
  clockwise switch and the top side $s_w$ of $R_w$ in a counter-clockwise
  switch. See \cref{FIG:canonical}.

  \begin{figure}[h!]
  \centering
  \includegraphics[page=28]{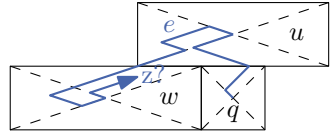}
  \caption{\label{FIG:canonical}Edges are canonical at one end vertex at least.}
  \end{figure}

  First consider the case that neither $s_u$ nor $s_w$ is mono-directed. 
  Then the only reason for these unpleasant switches is 
  that 
  \begin{inparaenum} 
  \item $u$ has an incoming edge $(q,u)$ and $R_q$ is
  incident to the bottom of $R_u$ and is to the right of $R_w$, and 
  \item  $w$ has an outgoing edge $(w,z)$ such that $R_z$ is
  incident to the top of $R_w$ and to the right of $R_u$.
  \end{inparaenum} 
  This is not
  simultaneously possible if at least one among $R_z$ or $R_q$ is a real rectangle.
  In the case that $R_z$ and $R_q$ were both
  virtual, the designated  face of $u$ and $w$ would be the same, $R_z$ and $R_q$ would be collinear and, thus, the
  {\sc Extra Rule} would apply. However, this implies that the switches at $u$ and $w$ are
  both clockwise or both counterclockwise. See \cref{SUBFIG:designatedcorner}.

  Consider now the case that one of the two sides $s_u$ and $s_w$, say $s_w$, is mono-directed. By \cref{PROP:mono-directed}, a non-canonical switch at a mono-directed side can only happen in the case of the {\sc Extra Rule}. See \cref{FIG:canonicalextrarule}.

  \begin{figure}[h!]
  \centering
  \includegraphics[page=19]{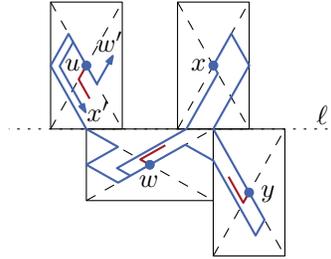}%
  \caption{\label{FIG:canonicalextrarule}Illustration of the proof of \cref{LEMMA:canonical} in the case of the {\sc Extra Rule}.}
  \end{figure}

  We distinguish two cases based on whether $s_u$ is bidirected or mono-directed.
  Suppose first that $s_u$ is bidirected. As above there must still be a neighbor $q$ of $u$ such that $R_q$ is to the right of $R_w$ and below $R_u$. But this is not possible.
  
  Assume now that also $s_u$ is mono-directed, i.e. the {\sc Extra Rule} was also applied to $u$ and a vertex $w'$ and the designated face $f'$ of $u$ and $w'$ is as indicated by the red stub incident to $u$ in \cref{FIG:canonicalextrarule}. Let $x'$ be the third vertex incident to $f'$. Observe that $R_u$ and $R_{w'}$ must have a common corner that lies on a side of $R_{x'}$. However, this would only be possible if the bottom right corner of $R_u$ and the bottom left corner of $R_{w'}$ lie on the top side of $R_{x'}$ which implies that $x' = w$. But $w$ is a 0-modal vertex and $x'$ cannot be 0-modal. 
  \end{proof}

  \begin{lemma}\label{LEMMA:construction}	
  	The constructed port assignment admits
  	 a planar L-drawing of $H$. 
  \end{lemma}
  \begin{proof}
  In the following we show how to route the edges in order to obtain a perturbed generalized planar L-drawing with the given port assignment. By \cref{LEMMA:perturbed} this is sufficient to obtain a planar L-drawing with the same port assignment.

  Recall that in a perturbed
  generalized planar L-drawing each edge is routed
  as a polyline lying in the two rectangles corresponding to its end vertices, composed of segments parallel to the respective diagonals and satisfying the following properties:  
  \begin{enumerate}[(1)]
  	\item\label{ITEM:diagonale} each directed edge
  $e=(u,v)$ starts with a segment on the diagonal $\backslash_u$ of
  $R_u$ from the upper left to the lower right corner and ends with a
  segment on the diagonal $\slash_v$ of $R_v$ from the lower left to the
  upper right corner. 
  \item\label{ITEM:common} The polylines representing two edges overlap in at most a common
  straight-line prefix or a suffix
  and they do not cross.
  \item\label{ITEM:rot} For an edge $e$ one of the following
  is true: (i) none of the two end vertices of mid$(e)$ is a bend of $e$
  and $\rot(e)=\pm 1$ or (ii) one of the two end vertices of mid$(e)$,
  but not both, is a bend of $e$ and rot(mid$(e))=0$.
  \end{enumerate}

  \cref{ITEM:diagonale} is already fulfilled by the constructed port assignment. Next, we show that the port assignment allows for a routing of the edges that also fulfills \cref{ITEM:common,ITEM:rot}.

  Let $e$ be an edge between $v$ and $w$ that is drawn in a canonical
  way at $w$. Let $e$ be assigned to the port $p$ of $R_w$. 
  Let $d$ be the corner of $R_w$ at the end of the diagonal corresponding to the port $p$.
  We
  define how to route $e$ distinguishing three main cases on the
  relationship of $R_v$ and $R_w$ with respect to $d$~--~see the columns of \cref{FIG:routing}. Each case has two subcases
  depending on whether $e$ is assigned in a canonical way at $v$ or
  not~--~see the rows of \cref{FIG:routing}. 
    We subdivide the case where $e$ is assigned in a canonical way at
  $v$ into two additional subcases b.1 and b.2. Let $u$ be the common neighbor
  of $v$ and $w$ that is incident to the side of $w$ containing
  the corner $d$. Let $d'$ be the common point of $R_u$, $R_v$, and $R_w$.
  The subcases depend on whether an edge incident to $u$ would use $d'$ or not.   
    Observe that in the last column of Row~b.2 the corner $d$ does not have to be on a side of $R_u$ since $R_u$ might be
    smaller.
  
    For each of the cases, we draw $e$ as sketched in the corresponding box in \cref{FIG:routing}: The first and the last segment of an edge represents a straight line of an appropriate length while
    each other segment represents a perturbed orthogonal polyline with rotation zero. 
    If $e$ is assigned to the same port as another edge $e'$, we make sure that the routing respects the embedding by appropriately selecting the length of the first or last segment of $e$ and $e'$. E.g., assume that $e'$ follows  $e$ in counterclockwise order around a vertex $v$ such that both are assigned to the North port of $v$ and assume that the first bend of both, $e$ and $e'$, is a right turn.  Then the bend of $e$ is closer to $v$ than the bend of $e'$.
    
    Observe that the edges are only routed within the rectangles of their end vertices. Thus, if there were crossings then they would involve edges incident to the same vertex $v$ and would lie within the rectangle $R_v$. However,
    the port assignment at $v$ makes sure that the middle part of the edges can be routed such that no two edges cross. This guarantees that \cref{ITEM:common} is fulfilled.

  It remains to show that also \cref{ITEM:rot} is fulfilled. 
  This is trivial for the cases in Row~b.1, since in this case the edge is composed of two parts, each having rotation zero, that meet at a bend $b$. Thus, mid$(e)$ either starts at $b$ and has rotation 0, or it contains $b$ as an inner point and has rotation $\pm 1$. Analogously,  for the other cases, it suffices to prove that 
  mid$(e)$ contains two out of the three indicated bends as inner points~--~one with a left turn and one with a right turn.
  In fact, in this case mid$(e)$ has rotation 0 or $\pm1$ depending on whether the third bend is an end point of mid$(e)$ or not.
  
  To this end, we prove that each of the bends that are encircled red are inner points of mid$(e)$. 
   This is obvious, if the bend is not on a diagonal, since the end points of mid$(e)$ lie on the diagonals of the rectangles of the end vertices of $e$. If the bend is on the diagonal, we
  prove that there are no edges leaving the diagonal after the bend. If there was such an edge then it would be one that immediately follows or precedes $e$ in the cyclic order around the respective end vertex.
  \begin{description}
  \item[b.2)] 
  	We argue about the bend $b$ on the diagonal of $R_v$ in Column~2, the arguments for the bend on the diagonal of $R_w$ in Column~3 is analogous.  
  	Observe that the edge $e$ is immediately followed by the edge $(u,v)$ in the cyclic order around $v$. Since $(u,v)$ is assigned to a different port of $v$ than $e$, the statement follows. 
  	  \begin{figure}[h]
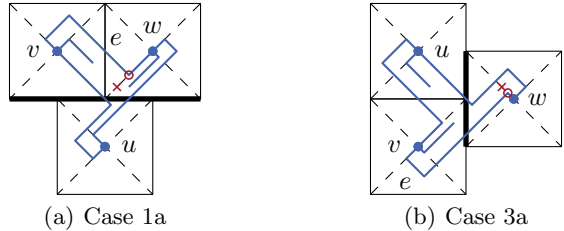

  		\centering
  		\subfigure[\label{SUBFIG:1a}Case~1a]{\includegraphics[page=14]{2-modal}}\hspace{2cm}
  		\subcaptionhack\subfigure[\label{SUBFIG:1a}Case~3a]{\includegraphics[page=15]{2-modal}}
  		\caption{\label{FIG:routingSpecial}The red bend is an inner point of mid$(e)$ in the case of the  \mbox{\sc Extra Rule}.}
  	\end{figure}
  	  \item[1a+3a)] 
   These two
    cases cannot happen except if we had applied the {\sc Extra Rule}.  See  \cref{FIG:routingSpecial}.
    I.e., two among $u,v,w$ are 0-modal
    vertices, their designated face is the face bounded by $u,v,w$, the respective virtual rectangles are collinear, and there is
    an unpleasant switch at the third vertex. In both cases the non-virtual edge that immediately follows (3a) or precedes (1a) $e$ in the cyclic order around $w$ is the edge $(u,w)$. However, the port assignment in the  {\sc Extra Rule} 
    guarantees that $(u,w)$ is assigned to a different port of $w$ than $e$.
  \end{description}

This concludes the proof of the lemma.
\end{proof}

\section{Outerplanar Digraphs}\label{APP:outer}

\begin{theorem}\label{THEO:outerinner}
	Not every biconnected internally triangulated outerplanar digraph with
	a 4-modal outerplanar embedding has an outerplanar L-drawing.
\end{theorem}

We start the proof of the theorem with the following observation.

\begin{figure}
	\centering
	\includegraphics[page=2]{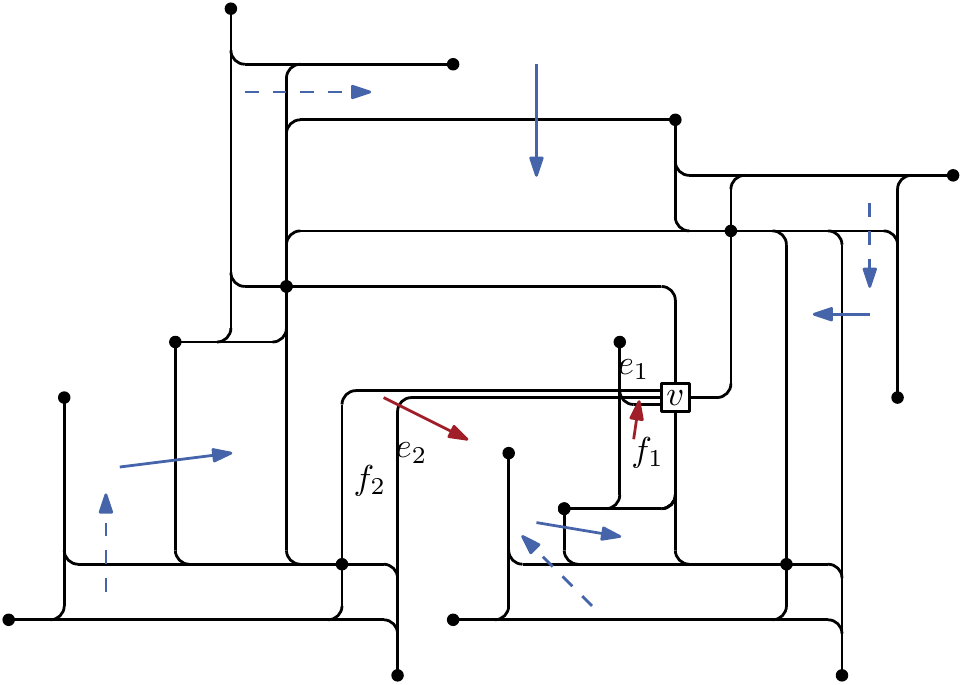}
	\caption{\label{FIG:outerExtension}How to make the vertices of a biconnected internally triangulated outerplanar digraph 4-modal.}
\end{figure}

\begin{lemma}\label{REM:outerExtension}
	Every biconnected internally triangulated outerplanar digraph $G$ with
	a 4-modal outerplanar embedding can be extended to an internally
	triangulated outerplanar digraph $G'$ with a 4-modal outerplanar
	embedding in which all vertices of $G$ are 4-modal. 
\end{lemma}
\begin{proof}
  See \cref{FIG:outerExtension} for an illustration. Let
  $v_1,\dots,v_n$ be the vertices of $G$ in the order in which they
  appear on the outer face and let $v_{n+1}=v_1$. For $i=1,\dots,n$
  add a new vertex $x_i$ with neighbor $v_i$ and a vertex $y_i$ with
  neighbors $x_i$, $v_i$, and $v_{i+1}$. Now each new vertex has
  degree at most three and thus, will be 2-modal, no matter how we
  orient the edges. Each vertex $v_i$, $i=1,\dots,n$ of $G$ is
  incident to three new edges. These can be oriented such that each
  $v_i$, $i=1,\dots,n$ gets 4-modal.
  \end{proof}

  \begin{figure}
	\centering
	\subfigure[\label{FIG:intTriangOut}graph]{\includegraphics[page=1]{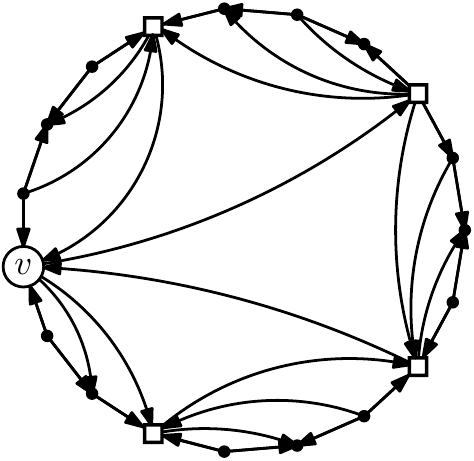}}
	\\
	\subfigure[\label{FIG:intTriangOutL}non-planar L-drawing]{\includegraphics[page=1]{triangulatedOuter}}
	\caption{A biconnected internally triangulated outerplanar digraph
      with a 4-modal outerplanar embedding that has no outerplanar
      L-drawing~--~in an extension that makes the present vertices
      4-modal. b) Different from the drawing conventions of
      L-drawings, the edges at the West port of $v$ are drawn slightly
      apart in order to make the embedding visible.  }
  \end{figure}

  \begin{proof}[of \cref{THEO:outerinner}]
  We consider the digraph $G=(V,E)$ in \cref{FIG:intTriangOut} augmented
  as described in \cref{REM:outerExtension}. The resulting digraph $G'$
  is a biconnected internally triangulated outerplanar digraph with a
  4-modal outerplanar embedding and has 57 vertices. Each vertex of
  $G$ is 4-modal in $G'$. We show that $G'$ has no planar L-drawing
  with the given embedding.  \cref{FIG:intTriangOutL} shows an
  L-drawing of $G$ that is, however, not planar.

  Consider now the following flow network $\mathcal L_G$ associated
  with a plane digraph $G=(V,E)$.  Let $F$ be the set of faces of $G$.
  $\mathcal L_G$ has node set $W:=V \cup E \cup F$, arcs from vertices
  to incident faces, and from faces to incident edges, and supplies
  $b(v) = \frac{4-\md(v)}{2}$, $v \in V$,
  $b(f) = \mp 2 + \# \textup{source switches of }f$, $f \in F$, and
  $b(e)= -1$, $e \in E$.

  Based on the relationship to Kandinsky drawings, Chaplick et
  al.~\cite{chaplick_etal:gd17} showed that any planar L-drawing of a
  biconnected digraph $G$ yields a flow $\phi$ in $\mathcal L_G$ as
  follows: Let $\alpha(v,f)$ be the angle in the face $f$ at a vertex
  $v$. Then $\phi(v,f) = \alpha(v,f)/\pi$ if the two edges incident to
  both, $v$ and $f$, are two outgoing or two incoming edges of $v$ and
  $\phi(v,f) = \alpha(v,f)/\pi -1/2$, otherwise. $\phi(f,e)=1$ if and
  only if there is a convex bend in face~$f$ on edge~$e$ and 0
  otherwise.

  However, not every flow on $\mathcal L_G$ corresponds to a planar
  L-drawing. In effect, the angles and bends in
  the non-planar L-drawing in \cref{FIG:intTriangOutL} also yield a
  flow $\phi'$ in $\mathcal L_G$. 
  
  Assume now that there is a planar L-drawing $\Gamma'$ of $G'$ and
  consider the drawing $\Gamma$ of $G$ induced by $\Gamma'$. Let
  $\phi$ be the flow that corresponds $\Gamma$.  The difference
  between $\phi$ and $\phi'$ is a union $C$ of directed cycles in the
  residual network $\mathcal L_{G,\phi'}$ of $\mathcal L_G$ with
  respect to the flow $\phi'$. Since in $G'$ all vertices of $G$ are
  4-modal, no arc from a vertex of $G$ to an inner face of $G$ can carry
  flow. Hence, $C$
  does not contain vertices of $G$. The direction of the arcs in
  $\mathcal L_{G,\phi'}$ is from a face $f$ to an incident edge $e$
  with a concave bend in $e$ and from $e$ to the other incident face
  $f'$.
	
  Since $\phi$ corresponds to a planar L-drawing it follows that at
  least one of the edges $e_1$ or $e_2$ must have a concave bend in
  the outer face $f_o$. Thus, $C$ contains the arcs
  $(f_2,e_2),(e_2,f_o)$ or the arcs $(f_1,e_1),(e_1,f_o)$. This
  implies that $C$ must also contain an arc from the outer face. But
  there are only three such arcs (solid blue arcs). Whenever such an
  arc is contained in $C$ then also the respective dashed blue arc has
  to be contained in $C$. Since in a union of directed cycles the
  indegree and the outdegree must be the same, $C$ can neither contain
  the arc $(e_2,f_o)$ nor $(e_1,f_o)$.
\end{proof}

\section{Planar 3-Trees}\label{APP:3tree}

A \emph{planar 3-tree} is defined recursively: The complete graph
$K_4$ on four vertices is a planar 3-tree. Adding a new vertex into an
inner face $f$ of a planar 3-tree and connecting it to the 3 vertices
on the boundary of $f$ yields again a planar 3-tree.
\begin{theorem}\label{THEO:3-tree}
  A bimodal graph has a planar L-drawing if the underlying undirected
  graph, after the removal of parallel edges due to 2-cycles, is a
  planar~3-tree.
\end{theorem}
\begin{proof}
  Observe that there are no separating 2-cycles in the digraph since
  planar 3-trees are 3-connected. Moreover, due to bimodality no
  vertex is incident to more than two 2-cycles.
	
  We start with a graph containing three vertices.  We draw that
  triangle with its 2-cycles and then keep inserting the vertices
  maintaining the invariant that there are no bad pincers. Observe
  that each inserted vertex has three adjacent vertices in the current
  digraph and up to five incident edges. We call a vertex an
  in/out-vertex of a face $f$ if it is neither a source switch nor a
  sink switch of $f$.
  
  \paragraph{So consider a bimodal graph with three vertices containing a triangle.} 
  If the outer face is a 2-cycle, remove one of the parallel edges,
  draw the triangle such that the respective vertices of the 2-cycle
  are extreme points of the diagonal of the bounding box of the
  drawing (\cref{FIG:sourceSink,FIG:IOSink,FIG:IOSource,FIG:IOIO}) and
  add the missing edge of the outer 2-cycle. If the outer 2-cycle had
  been between the source and the sink of the triangle then~--~due to
  bimodality~--~the triangle is not involved in any further
  2-cycles. Otherwise there can be at most one more 2-cycle. Bad
  pincers are not possible in these cases.  \newlength{\ruler}
  \setlength{\ruler}{0.7cm}
  \begin{figure}
    \subfigure[\label{FIG:sourceSink}sink--source]{\rule{\ruler}{0cm}\includegraphics[page=5]{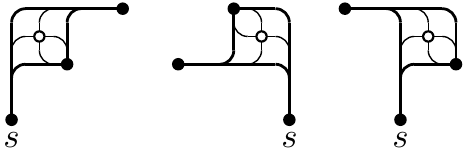}\rule{\ruler}{0cm}}
    \hfill
    \subcaptionhack\subfigure[\label{FIG:IOSink}sink--in/out]{\rule{\ruler}{0cm}\includegraphics[page=6]{3-tree}\rule{\ruler}{0cm}}
    \hfill
    \subcaptionhack\subfigure[\label{FIG:IOSource}in/out--source]{\rule{\ruler}{0cm}\includegraphics[page=7]{3-tree}\rule{\ruler}{0cm}}
    \hfill
    \subcaptionhack\subfigure[\label{FIG:IOIO}in/out--in/out]{\rule{\ruler}{0cm}\includegraphics[page=8]{3-tree}\rule{\ruler}{0cm}}\\
     \centering
     \subfigure[\label{FIG:directed}directed]{\rule{\ruler}{0cm}\includegraphics[page=9]{3-tree}\rule{\ruler}{0cm}}
    \hspace{2cm}
    \subcaptionhack\subfigure[\label{FIG:notdirected}not directed]{\rule{\ruler}{0cm}\includegraphics[page=10]{3-tree}\rule{\ruler}{0cm}}
    
    \caption{\label{FIG:outer_face}Different ways of drawing a
      triangle $T$ with possible 2-cycles. Top row: the outer face is
      bounded by a 2-cycle where the edge not in $T$ connects the
      indicated vertices of $T$. Bottom row: the outer face is bounded
      by $T$ and $T$ is directed or not. }
  \end{figure}

  Consider now the case that the outer face is bounded by the triangle.
  We say that a \emph{triangle contains a 2-cycle} if one edge of the
  2-cycle is an edge of the triangle and the other edge of the 2-cycle
  is in the interior of the triangle.  Observe that in a bimodal
  embedding a triangle can contain at most two 2-cycles. Moreover, if
  a triangle contains two 2-cycles then the common vertex of the two
  2-cycles must be the source or the sink of the triangle. Thus, if
  the triangle is a directed cycle, it contains at most one 2-cycle.

  Now, if the outer face is a directed cycle of length three, draw it
  as indicated in \cref{FIG:directed}.  Otherwise,
  the three vertices incident to the outer face are a source, a sink,
  and an in/out vertex.  Start with the drawing of the triangle where
  no two edges are attached to the same port of a vertex. 
  Add the
  2-cycles. Thus, the outer face with its 2-cycles is a subgraph of
  one of the two cases in \cref{FIG:notdirected}. 
  
  Observe that in any case no two edges $e_1$ and $e_2$ that are
  assigned to the same port of a vertex $v$ can be a pincer.  There is
  already an edge in opposite direction incident to $v$. If there was
  an edge that had to be inserted later on between $e_1$ and $e_2$ and
  that also had the opposite direction as $e_1$ and $e_2$ then $v$
  would not be 2-modal.
  
  \paragraph{Assume now that we insert the next vertex $v$ with three
    neighbors.} If the neighbors of $v$ induce a directed triangle
  $T$, then there cannot be a pincer that consists of an edge in $T$
  and an edge incident to $v$: there is already an oppositely directed
  edge in $T$ incident to the common end vertex. For the same reason,
  if $v$ is neither a source nor a sink, then there cannot be a pincer
  incident to $v$.

  \begin{figure}
    \centering
    \includegraphics[page=1]{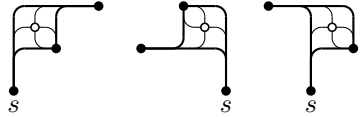}
    \caption{\label{FIG:upwardTriangle2pincer}Different ways of
      drawing a triangle where the angle at both, the source and the sink, is $0^{\circ}$
      and how to add an additional internal vertex.}
  \end{figure}

  Assume first that $T$ has a source $s$, a sink $t$ and an
  in/out-vertex $w$. There are three cases: (1) The angles in $T$ at
  $s$ and $t$, respectively, are both $0^{\circ}$
  (\cref{FIG:upwardTriangle2pincer}), (2) the angle at either the
  source or the sink~--~say the source~--~is $0^{\circ}$
  (\cref{FIG:vertexInTriangleAll}, Columns~d+e), or (3) $T$ does not
  contain a $0^{\circ}$ angle (\cref{FIG:vertexInTriangleAll},
  Column~c).
  \begin{figure}
    \centering
    \includegraphics[page=11]{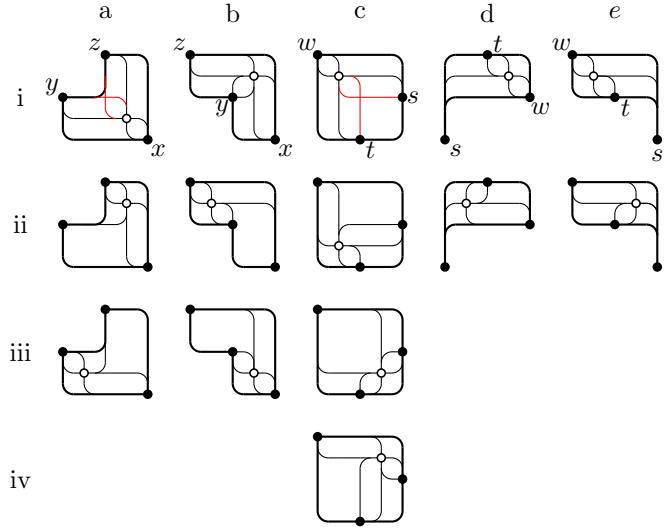}
    \caption{\label{FIG:vertexInTriangleAll}Different ways of inserting
      a vertex into a triangle.}
  \end{figure}
  
  In the first case~--~since there are no bad pincers~--~the direction
  of the edges between $v$ on one hand and the source and the sink on
  the other hand are fixed and it follows that~--~due to
  2-modality~--~$v$ cannot be incident to a 2-cycle. The pair of edges
  with a $0^\circ$ angle at the in/out-vertex $w$ of $T$ cannot be a
  pincer, since there is already an oppositely directed edge incident
  to $w$. This concludes Case~1.
  
  In Cases~2 and~3, we have the property that if there is a 2-cycle
  between $v$ and $w$ then~--~due to 2-modality of $w$~--~the order
  around $w$ is fixed and must be such that on both sides an edge of
  $T$ and an edge of the 2-cycle forms a $0^{\circ}$ angle.  Consider
  now Case~2 and assume w.l.o.g.\ that there are two edges of $T$ that
  form a $0^{\circ}$ angle at $s$. Up to symmetry there are two such
  drawings of $T$. See \cref{FIG:vertexInTriangleAll}, Columns~d+e. If
  $t$ is incident to a 2-cycle with $v$ then the order of the edges in
  the 2-cycle determines whether we are in case (i) or (ii). No two
  edges that are assigned to the same port are a pincer, since there
  is always an oppositely directed incident edge. If $t$ is not
  incident to a 2-cycle with $v$, we can choose between case (i) and
  (ii) and we do it such that we do not create a bad pincer incident
  to $t$.
  
  Consider now Case~3. There can be at most two 2-cycles incident to
  $v$, and if so, fixing the ordering of the edges of one 2-cycle also
  fixes the ordering of the other (due to 2-modality of
  $v$). Depending on this ordering, we can always choose one out of
  the Cases~ii or~iv. No bad pincers can occur. If $v$ is incident to
  at most one 2-cycle, the choice depends on pincers at $v$, $s$ and
  $t$ and can always without creating bad pincers: There are all four
  variants of pairs of $0^{\circ}$ angles at $s$ and $t$, so we can
  always choose one that does not contain a bad pincer. If $v$ is a
  source or a sink then $v$ is not incident to any 2-cycles. It
  follows that $T$ cannot contain edges incident to both $s$ and $t$
  that are involved in pincers with edges incident to $v$. E.g.,
  assume that $v$ is a source. Then only $(v,t)$ can be involved in a
  bad pincer: $(v,s)$ and an edge on $T$ incident to $s$ do not have
  the same direction at $s$, and $w$ is already incident to an edge in
  the opposite direction as $(v,w)$. Among the four possibilities we
  can always choose one without bad pincers.

  Assume now that $T$ is a directed triangle
  (\cref{FIG:vertexInTriangleAll}, Columns~a+b). Observe that due to
  2-modality $v$ cannot be incident to two 2-cycles. It remains to
  check for bad pincers at $v$ in the case were $v$ is a source or a
  sink. But there are always enough choices so they can be avoided.
\end{proof}

\end{document}